\def\dOi{10(3:15)2014}
\keywords{random strings, truth-table degrees, strong reducibilities,
  minimal pair} 
\def\phi{\varphi}
\def\res{\!\!\upharpoonright\!}     
\def\+{\oplus}
\def\*{\odot}
\newcommand{\0}{\mathbf{0}}
\newcommand{\ba}{\begin{eqnarray*}}
\newcommand{\ea}{\end{eqnarray*}}
\newcommand{\bes}{\begin{enumerate}\topsep=1.5mm \itemsep=-1mm}
\newcommand{\ee}{\end{enumerate}}
\newcommand{\ov}{\overline}
\newcommand\leT{\leq_{\rm T}}
\newcommand{\lett}{\leq_{\rm {tt}}}
\newcommand{\es}{\emptyset}
\newcommand{\ce}{c.e.\ }
\newcommand{\RKUO}{R_1}
\newcommand{\RKUT}{R_2}
\newcommand{\RKUj}{R_j}
\newcommand{\KUj}{K_j}
\newcommand{\ei}{{\langle e,i\rangle}}
\newcommand{\eizero}{{\langle e_0,i_0\rangle}}
\newcommand{\ns}{{\langle n,s\rangle}}
\newcommand{\meizerox}{{m_{\eizero,x}}}
\newcommand{\eip}{{\langle e',i'\rangle}}
\def\ba{{\bf a}}
\def\uh{\upharpoonright}
\newcommand{\<}{\langle}
\renewcommand{\>}{\rangle}
\newcommand{\concat}{\symbol{94}}
\begin{document}

\title[Random strings and tt-degrees of Turing complete C.E.\ sets]%
{Random strings and truth-table degrees of Turing complete C.E.\ sets}

\author[M.~Cai]{Mingzhong Cai\rsuper a}
\address{{\lsuper a}Department of Mathematics \\
  Dartmouth College \\
  Hanover, NH 03755, USA}
\email{Mingzhong.Cai@dartmouth.edu}
\thanks{{\lsuper a}The first author's research was partially supported
  by NSF grant DMS-1266214.}


\author[R.~G.~Downey]{Rodney G. Downey\rsuper b}
\address{{\lsuper b}Department of Mathematics, Statistics, and Operations
  Research \\
  Victoria University of Wellington \\
  P.O. Box 600 \\
  Wellington, NEW ZEALAND}
\email{rod.downey@msor.vuw.ac.nz}
\thanks{{\lsuper b}The second author's research was partially supported by a
  Marsden grant.}


\author[R.~Epstein]{Rachel Epstein\rsuper c}
\address{{\lsuper c}Department of Mathematics and Statistics\\
  Swarthmore College \\
  500 College Ave \\
  Swarthmore, PA 19081, USA}
\email{rachel.epstein@swarthmore.edu}
\thanks{{\lsuper c}The third author's research was partially supported by an
AMS-Simons Foundation Travel Grant.}


\author[S.~Lempp]{Steffen Lempp\rsuper d}
\address{{\lsuper{d,e}}Department of Mathematics \\
  University of Wisconsin \\
  Madison, WI 53706-1388, USA}
\email{\{lempp,jmiller\}@math.wisc.edu}

\thanks{{\lsuper d}The fourth author's research was partially
  supported by AMS-Simons Foundation Collaboration Grant 209087.}

\author[J.~S.~Miller]{Joseph S. Miller\rsuper e}
\thanks{{\lsuper e}The last author's research was partially supported
  by NSF grant DMS-1001847.}

\amsclass{Primary: 68Q30, 03D25, 03D30; Secondary: 68Q15, 03D32}

\begin{abstract}
We investigate the truth-table degrees of (co-)c.e.\ sets, in particular,
sets of random strings. It is known that the set of random
strings with respect to any universal prefix-free machine is Turing complete,
but that truth-table completeness depends on the choice of universal machine.
We show that for such sets of
random strings, any finite set of their truth-table degrees do not meet to
the degree~$\mathbf0$, even within the \ce truth-table degrees, but when
taking the meet over all such truth-table degrees, the infinite meet is
indeed~$\mathbf0$. The latter result proves a conjecture of Allender,
Friedman and Gasarch. We also show that there are two Turing
complete c.e.\ sets whose truth-table degrees form a minimal pair.
\end{abstract}

\maketitle\vfill


\section{Introduction}
\label{sec_intro}

Recent work in theoretical computer science has established a link between
computational complexity classes and the languages efficiently reducible to
sets of random strings. Intuitively, a random (finite, binary) string is one
that does not have a shorter description than itself. We use \emph{Kolmogorov
complexity} to formalize this intuition, but this leaves us with choices:
There are two common types of Kolmogorov complexity (plain and prefix-free),
and within each type, the set of random strings depends on the choice of
universal machine. See Section~\ref{background} for more detail. Irrespective
of these choices, however, the set of random strings can be shown to ``speed
up'' computation.

\begin{thm}[Buhrman, Fortnow, Kouck\'y and Loff~\cite{BFKL:10}; Allender,
Buhrman, Kouck\'y, van Melkebeek and Ronneburger~\cite{ABKMR:06}; Allender,
Buhrman and Kouck\'y~\cite{ABK:06}]\label{thm:low} Let~$R$ be the set of all
random strings for either plain or prefix-free complexity.
\begin{itemize}
	\item $\textup{BPP}\subseteq \textup{P}^R_\text{tt}$.
	\item $\textup{PSPACE}\subseteq \textup{P}^R$.
	\item $\textup{NEXP}\subseteq \textup{NP}^R$.
\end{itemize}
\end{thm}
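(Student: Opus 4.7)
The plan is to handle each of the three inclusions separately, since they come from different papers and use different techniques, though all exploit the same intuition: the set $R$ is simultaneously a dense source of pseudo-random bits and an oracle encoding considerable computational hardness. The overarching obstacle throughout is that $R$ is not computable, so the proof cannot directly manipulate members of~$R$ or compute their Kolmogorov complexities; every use of~$R$ must be packaged as an oracle query whose location is determined efficiently from the input (and, in the adaptive case, from prior oracle answers).

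For $\mathrm{BPP}\subseteq \mathrm{P}^R_{\mathrm{tt}}$, I would start from a $\mathrm{BPP}$ algorithm whose error has been amplified to be exponentially small, and exploit the fact that at every length~$n$ a constant fraction of strings are members of~$R$, since they are near-incompressible. The reduction on input~$x$ makes a polynomial number of nonadaptive queries to~$R$ on carefully chosen strings of length polynomial in $|x|$; the pattern of membership answers is then decoded in polynomial time into enough pseudo-random bits to drive the $\mathrm{BPP}$ computation, and the final output is the majority vote of these runs. Keeping the queries nonadaptive and independent of oracle answers is precisely what makes the reduction truth-table rather than merely Turing, and is the main technical hurdle of this part.

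For $\mathrm{PSPACE}\subseteq \mathrm{P}^R$, I would invoke the characterization $\mathrm{IP}=\mathrm{PSPACE}$: on input~$x$, run an interactive-proof verifier for a $\mathrm{PSPACE}$-complete problem such as $\mathrm{QBF}$, and replace each of the prover's (polynomially many) messages by an adaptive query to~$R$, using that $R$ is Turing-complete and hence capable of computing any $\mathrm{PSPACE}$ function. The adaptive use of~$R$ is precisely why this inclusion yields $\mathrm{P}^R$ but not $\mathrm{P}^R_{\mathrm{tt}}$. For $\mathrm{NEXP}\subseteq \mathrm{NP}^R$, I would use a PCP-style characterization of $\mathrm{NEXP}$: nondeterministically guess a short certificate and verify it by polynomially many queries to~$R$ that index into an implicit exponentially long consistent proof, whose existence can be forced by the incompressibility of members of~$R$. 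In all three cases, once the reduction architecture is in place, the work is to verify that the \emph{information} one needs is present in the characteristic function of~$R$ itself, rather than in the Kolmogorov complexities of its members which are unavailable to a bounded observer.
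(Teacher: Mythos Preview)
The paper does not prove this theorem. Theorem~\ref{thm:low} is stated in the introduction purely as background, with attribution to the three cited papers; the paper's own contributions begin with Theorem~\ref{theorem1}. So there is no proof in the paper to compare your proposal against.

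That said, your sketches contain real gaps worth flagging. For $\mathrm{BPP}\subseteq \mathrm{P}^R_{\mathrm{tt}}$, the source of pseudorandomness is not ``the pattern of membership answers.'' The argument of Buhrman--Fortnow--Kouck\'y--Loff is a hardness-versus-randomness tradeoff: a string $\sigma\in R$ of length~$2^m$, read as the truth table of an $m$-variable Boolean function, must have circuit complexity close to~$2^m$ (otherwise a small circuit would compress it), and one feeds such a hard function into a Nisan--Wigderson/Impagliazzo--Wigderson generator. The oracle queries serve only to locate a string that lies in~$R$; the string itself, not the yes/no answer pattern, drives the generator.

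For $\mathrm{PSPACE}\subseteq \mathrm{P}^R$, the step ``$R$ is Turing-complete and hence capable of computing any $\mathrm{PSPACE}$ function'' conflates computability with complexity: Turing completeness of~$R$ says nothing about polynomial-time reductions, so it cannot by itself put $\mathrm{PSPACE}$ inside $\mathrm{P}^R$. The actual argument again runs through hardness-versus-randomness to derandomize the verifier in $\mathrm{IP}=\mathrm{PSPACE}$, together with the random self-reducibility/instance-checkability of $\mathrm{PSPACE}$-complete problems to simulate the prover. Your $\mathrm{NEXP}$ sketch is closer in spirit, but the mechanism is the same: derandomize the verifier in a $\mathrm{MIP}$/PCP characterization of $\mathrm{NEXP}$ using a PRG built from a member of~$R$, then use the nondeterminism to guess the (now only polynomially many) proof bits the derandomized verifier reads.
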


\noindent So, for example, a language in PSPACE can be recognized by a
polynomial-time machine with access to~$R$.

It is also possible to give upper bounds for what can be efficiently
reducible to the set of random strings.

\begin{thm}[Allender, Friedman and Gasarch~\cite{AFG}]\label{thm:AFG}\
\begin{itemize}
\item $\Delta^0_1\;\cap\;\bigcap_{U}\textup{P}_\text{tt}^{R_{K_U}}
    \subseteq\textup{PSPACE}$.
\item $\Delta^0_1\;\cap\;\bigcap_{U}\textup{NP}^{R_{K_U}}
    \subseteq\textup{EXPSPACE}$.
\end{itemize}
Here~$U$ ranges over universal prefix-free machines,~$K_U$ is prefix-free
complexity as determined by~$U$, and~$R_{K_U}$ is the corresponding set of
random strings.
\end{thm}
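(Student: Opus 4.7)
My plan is to produce, for any decidable $A$ satisfying the hypothesis, a $\textup{PSPACE}$ algorithm computing~$A$ (and an analogous $\textup{EXPSPACE}$ algorithm for the second inclusion). The key observation is that on input $x$ of length~$n$, each polynomial-time truth-table reduction $\Gamma$ produces queries $q_1,\dots,q_k$ with $k\le p(n)$ and a truth-table $f\colon\{0,1\}^k\to\{0,1\}$; among the $2^k$ candidate oracle-answer vectors $b\in\{0,1\}^k$, only the \emph{feasible} ones --- those of the form $R_{K_U}(q_1),\dots,R_{K_U}(q_k)$ for some universal prefix-free machine~$U$ --- are relevant. Since $A$ is assumed to lie in $\textup{P}_\text{tt}^{R_{K_U}}$ for \emph{every} universal $U$, each feasible $b$ is realized by some correct reduction; the plan is to show that a $\textup{PSPACE}$ algorithm can extract the common value $A(x)$ from this data.

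The first step is a Kolmogorov-complexity lemma characterizing feasibility by a polynomial-length combinatorial condition: I claim $b$ is feasible iff there is a finite prefix-free set $\{d_i : b_i=0\}$ with $|d_i|<|q_i|$ and total weight $\sum_{i:\, b_i=0} 2^{-|d_i|}\le C$, where $C$ is a constant coming from a fixed reference universal machine. The ``only if'' direction is immediate from $K_U(q_i)<|q_i|$; the ``if'' direction uses Kraft--Chaitin to construct a new universal $U'$ by adjoining the codewords $d_i$ as short descriptions of the corresponding $q_i$, while the weight bound~$C$ ensures the $q_j$ with $b_j=1$ remain $K_{U'}$-random. Deciding the existence of such a codeword set, with $|q_i|\le p(n)$, is in $\textup{PSPACE}$ by exhaustive search; for the $\textup{NP}$-reduction case of the second inclusion, an additional nondeterministic guess of polynomial length inflates the bound by one exponential layer to $\textup{EXPSPACE}$.

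With the feasibility predicate decidable in $\textup{PSPACE}$, the algorithm enumerates polynomial-time reductions $\Gamma_0,\Gamma_1,\dots$ up to a polynomial index bound and, for each $\Gamma_e$, computes its queries, its truth-table~$f$, and the multiset $\{f(b) : b \text{ feasible}\}$. To commit to an output it exploits the decidability of $A$: for shorter inputs $y$, $A(y)$ can be computed by brute force in $\textup{PSPACE}$, and used to certify that $\Gamma_e$'s behaviour on feasibles is consistent with~$A$. The first $\Gamma_e$ passing this consistency check is used, and its common value $f(b)$ is returned. The main obstacle --- and the heart of the argument --- is the feasibility lemma of Step~2: its proof requires showing that polynomially many Kolmogorov-randomness bits at polynomially short queries can be shifted across universal prefix-free machines exactly as permitted by a finite Kraft--Chaitin budget. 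Given this, the $\textup{PSPACE}$ and $\textup{EXPSPACE}$ algorithms amount to careful bookkeeping over reductions and feasible vectors.
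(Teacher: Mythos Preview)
The paper does not prove this theorem: it is quoted from Allender, Friedman and Gasarch~\cite{AFG} as background, and the present paper's contribution is the orthogonal result (Theorem~\ref{knight-bishop}) that the restriction to $\Delta^0_1$ is redundant. So there is no proof here to compare your proposal against; for the actual argument you would need to consult~\cite{AFG}.

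That said, your proposal has genuine gaps as written. The feasibility lemma is not correct in the ``if'' direction: to realise a vector $b$ you must build a universal $U'$ with $K_{U'}(q_j)\ge|q_j|$ whenever $b_j=1$, but universality forces $U'$ to simulate a fixed reference machine with constant overhead, so any $q_j$ that is already highly compressible will be non-random for \emph{every} universal $U'$, regardless of what extra codewords you adjoin or what weight bound you impose on them. Thus feasibility genuinely depends on the (noncomputable) reference complexities of the $q_j$, not merely on their lengths, and cannot be decided by the combinatorial Kraft condition you describe. Separately, your algorithm's consistency check---computing $A(y)$ ``by brute force in $\textup{PSPACE}$'' on shorter inputs---is circular: $A$ is only assumed to be in $\Delta^0_1$, so there is no a priori $\textup{PSPACE}$ bound on deciding $A(y)$, and you also have no polynomial bound on the index $e$ of a correct reduction $\Gamma_e$ (different universal machines may require arbitrarily large indices). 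The argument in~\cite{AFG} instead constructs, for each purported reduction, a specific universal machine that defeats it unless the reduction already behaves well, in the spirit of Muchnik's game technique referenced in the paper.
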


Taking the intersection over all universal prefix-free machines has the
effect of ``factoring out'' the choice of machine. Why is this necessary?
First note that~$R_{K_U}$ is not computable (hence not in PSPACE) but it is
efficiently reducible to itself. This example is unsatisfying, of course,
because we are already explicitly restricting to computable (i.e.,
$\Delta^0_1$) languages. For a better example, note that is possible to build
a universal prefix-free machine~$U$ for which there is a computable set $A\in
\textup{P}_\text{tt}^{R_{K_U}}$ that is not in EXPSPACE.\footnote{For plain
complexity, this follows from~\cite[Theorem~12]{ABK:06}. The authors point
out that the same proof works in the case of prefix-free complexity. It
remains open if for \emph{every} universal prefix-free machine~$U$, there is
a computable set
$A\in\textup{P}_\text{tt}^{R_{K_U}}\smallsetminus\text{EXPSPACE}$, even
though the corresponding fact holds for plain complexity. See the discussion
after \cite[Theorem~16]{ABK:06}.}

There are two other ways that Theorem~\ref{thm:AFG} is restricted. For one,
it is only stated for prefix-free complexity; Allender, Friedman and
Gasarch~\cite{AFG} conjecture that it holds for plain complexity as well.
More important for our purposes is the explicit restriction to computable
languages. Allender et al.\ conjecture that this restriction is redundant.

\begin{conj}[Allender, Friedman and Gasarch~\cite{AFG}]\label{conj}
If $A\in \bigcap_{U}\textup{NP}^{R_{K_U}}$, then~$A$ is computable.
(Therefore, $\Delta^0_1\;\cap$ can be removed from both parts of
Theorem~\ref{thm:AFG}.)
\end{conj}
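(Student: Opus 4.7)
The plan is to derive the conjecture from a purely computability-theoretic meet theorem: \emph{if $A \lett R_{K_U}$ for every universal prefix-free machine $U$, then $A$ is computable.} The NP-version of the conjecture then follows by upgrading NP-reductions with the co-c.e.\ oracle $R_{K_U}$ to truth-table reductions, a move that should be uniform in $U$, because the existential NP-witness can be absorbed into a (large but bounded) tt-table once the oracle has the controlled complexity structure of $R_{K_U}$. Combined with the meet theorem, this forces any $A \in \bigcap_U \textup{NP}^{R_{K_U}}$ to be computable.

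For the meet theorem I would argue the contrapositive by a priority construction: given non-computable $A$, build a universal prefix-free machine $U$ such that $\Phi_e^{R_{K_U}} \neq A$ for every tt-functional $\Phi_e$. The central lever is that enumerating a program $p$ for $y$ with $|p| < |y|$ into $U$ makes $y$ non-random at a Kraft cost of $2^{-|p|}$; one cannot directly force a string to be random, only non-random. To ensure that $U$ is universal, reserve half of the Kraft budget by setting $U(0p) \simeq V(p)$ for a fixed universal prefix-free machine $V$, and spend the remaining weight on diagonalization. To satisfy $R_e$, pick a large witness $n$, examine the finite query set $\{y_1, \ldots, y_k\}$ of $\Phi_e$ on $n$, and flip the tt-output by selectively assigning short descriptions to some $y_i$ so that $\Phi_e^{R_{K_U}}(n) \neq A(n)$. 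Non-computability of $A$ guarantees infinitely many candidate $n$ for which such a flip is both feasible within the weight budget and effective against $A$, since any computable prediction of $A$ read off from the tt-pattern must fail infinitely often.

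The main obstacle is the global consistency of these diagonalizations alongside the Kraft budget. Making a string $y$ non-random for $R_e$ may later interfere with the query sets of requirements $R_{e'}$ with $e' > e$, and the cumulative Kraft contributions must remain under the reserved allowance. This can be arranged by a finite-injury priority argument that spaces the witnesses $n_e$ so that the strings involved at stage $e$ have lengths far exceeding those used earlier, making the Kraft contributions decay geometrically and ensuring that later requirements do not destabilize earlier ones. A secondary subtlety is verifying the NP-to-tt step from the first paragraph: one needs the constructed $U$ to be robust under the small modifications arising from tabulating NP witnesses, which can be arranged by performing the diagonalization with an appropriate safety margin in the Kraft budget.
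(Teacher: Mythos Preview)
Your reduction of the conjecture to the tt-meet statement is correct, but simpler than you suggest: an $\textup{NP}$-machine runs in bounded time, so its oracle queries are bounded by a computable function of the input; hence $\textup{NP}^R\subseteq\{A\mid A\lett R\}$ for \emph{any} oracle~$R$, with no appeal to the structure of~$R_{K_U}$ or to any ``safety margin'' in the construction of~$U$.

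The real gap is in your sketch of the meet theorem. You write ``flip the tt-output by selectively assigning short descriptions to some~$y_i$ so that $\Phi_e^{R_{K_U}}(n)\neq A(n)$'', but this underestimates the difficulty in two ways. First, you can only flip bits of~$R_{K_U}$ from~$1$ to~$0$ (make strings non-random), never the reverse; so from a given position on the truth table there may be no reachable row with the output value you need. Your appeal to non-computability of~$A$ does not fix this: the set of~$n$ at which you \emph{can} force a given value is not obviously computable, so you cannot simply search for a workable witness. Second, and more seriously, the universality requirement $U(0p)\simeq V(p)$ is not a passive reservation of measure---it is an \emph{opponent} who continues to flip bits of~$R_{K_U}$ from~$1$ to~$0$ forever, potentially undoing any diagonalization you set up. Your finite-injury spacing of witnesses controls your own Kraft cost but does nothing to stop~$V$ from later compressing a query string and changing $\Phi_e^{R_{K_U}}(n)$ back.

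The paper handles both issues by recasting the diagonalization as a two-player game in which each side has a measure budget and can only flip~$1\to 0$. In a symmetric game $G(\epsilon,\epsilon)$ one side always has a winning strategy forcing $\Phi_e^{R}(n)$ to some value; the paper builds \emph{three} machines $U_0,U_1,U_2$ so that if the forceable values (more generally, the winning sets over $n$-bit outputs) ever disagree, two of the machines can be played against each other to create a permanent disagreement, while if they always agree the common value is computable and hence not~$A$. The third machine is needed to show the winning sets are closed under intersection (the ``ultrafilter'' step) and to act as a reserve (``bishop'') that restores agreement when the opponent overspends in a lower-level game. The authors explicitly note they do not know whether a single machine suffices, so your one-machine plan is attacking a problem the paper leaves open, and your current outline does not supply the missing game-theoretic machinery.
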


\noindent We prove this conjecture and study related questions.

Our approach is purely computability-theoretic. Any set in $\textup{NP}^{R}$
is truth-table reducible to~$R$, so we study tt-reduction to sets of random
strings, i.e., sets of the form~$R_{K_U}$ for different choices of the
prefix-free universal machine~$U$. We show in Theorem~\ref{thmfinite} that
every finite collection of sets of random strings can tt-compute some
noncomputable computably enumerable set. Note that sets of random strings are
Turing complete (and co-c.e.), so it is reasonable to ask if
Theorem~\ref{thmfinite} is a special case of a more general restriction on
the tt-degrees of Turing complete \ce sets. It is not;
Theorem~\ref{minpairthm} shows that there is a minimal pair of Turing
complete \ce sets within the tt-degrees. Finally, in
Theorem~\ref{knight-bishop} we prove that there is \emph{no} noncomputable
set that is tt-reducible to \emph{every} set of random strings. This verifies
Conjecture~\ref{conj}.

Putting Theorems~\ref{thm:low} and~\ref{thm:AFG} together with
Conjecture~\ref{conj}, we obtain:
\begin{itemize}
\item $\textup{BPP}\subseteq\bigcap_{U}\textup{P}_\text{tt}^{R_{K_U}}
    \subseteq\textup{PSPACE}$
\item $\textup{NEXP}\subseteq\bigcap_{U}\textup{NP}^{R_{K_U}}
    \subseteq\textup{EXPSPACE}$
\end{itemize}
In each case,~$U$ ranges over universal prefix-free machines.
Allender~\cite{allender} conjectures that the lower bounds are tight, i.e.,
that $\textup{BPP} = \bigcap_{U}\textup{P}_\text{tt}^{R_{K_U}}$ and
$\textup{NEXP} = \bigcap_{U}\textup{NP}^{R_{K_U}}$, but this is still
\emph{very much} an open question.

\subsection{Definitions and background}\label{background}

The Kolmogorov complexity of a finite string \mbox{$\sigma\in 2^{<\omega}$}
is a measure of how difficult it is to describe~$\sigma$. Let~$M\colon
2^{<\omega}\to 2^{<\omega}$ be partial computable function (we call such a
function a \emph{machine}). The plain complexity of~$\sigma$ with respect
to~$M$ is
\[
C_M(\sigma)=(\mu n)(\exists \tau)[U=M(\tau)=\sigma\ \ \& \ \ |\tau|=n].
\]
This depends on the choice of~$M$, but it is straightforward to check that
there is a \emph{universal} machine~$U$ such that~$C_U$ is optimal for such
machines, up to an additive constant. Plain Kolmogorov complexity~$C$ is
defined to be~$C_U$ for a fixed universal machine~$U$. Note that for any two
universal machines~$U$ and~$V$, $C_U(\sigma)\leq C_V(\sigma)+c$ for some
constant~$c$ depending on~$U$ and~$V$.

We define \emph{prefix-free Kolmogorov complexity} in a similar manner. We
say that a machine~$M\colon 2^{<\omega}\to 2^{<\omega}$ is \emph{prefix-free}
if whenever~$\sigma$ and~$\tau$ are two distinct strings contained in the
domain of~$M$, then neither is a prefix of the other (i.e.,
$\sigma\mid\tau$).  A \emph{universal} prefix-free machine is one that can
simulate all other prefix-free machines. Prefix-free complexity with respect
to a universal prefix-free machine~$U$ is written~$K_U(\sigma)$, and is
defined in the same way as~$C_U(\sigma)$.  Similarly,~$K(\sigma)$ is
$K_U(\sigma)$ for some fixed universal prefix-free machine~$U$.  As before,
the choice of~$U$ can make at most a finite difference.

As a notational convention, we use~$[s]$ after a term to mean the state of
that term after~$s$ stages.  For instance, $K(\sigma)[s]=K_U(\sigma)[s]$ is
the shortest length of any~$\tau$ such that $U_s(\tau)=\sigma$.

While plain complexity~$C$ at first seems like the most natural way to define
complexity, it lacks some properties that we would expect a complexity
measure to have.  For example, it is not true that there is a constant~$c$
such that $C(\sigma\tau)\leq C(\sigma)+C(\tau)+c$.  Thus, to describe the
concatenation of~$\sigma$ and~$\tau$, we cannot simply provide descriptions
of both strings along with some finite code for concatenation, as we would
expect that we could.  Prefix-free complexity~$K$ satisfies more properties
that we would desire a complexity measure to satisfy.  For instance, there
does exist a constant~$c$ such that $K(\sigma\tau)\leq K(\sigma)+K(\tau)+c$.
(For more information, see~\cite[p.~83]{Nies},
or~\cite[p.~121]{Downey-Hirschfeldt}.)

Intuitively, for a string to be random, it should have no description shorter
than its own length.  This leads to the following definitions.  Let~$U$ be a
universal prefix-free machine.  We define the set of random strings with
respect to~$U$ by
$$
R_{K_U}=\{\sigma \mid K_U(\sigma)\geq |\sigma|\}.
$$
For a fixed prefix-free universal machine~$U$, we let $R_K=R_{K_U}$.
Similarly, we can define~$R_{C_V}$ and~$R_C$ using plain complexity and a
standard universal machine~$V$.

Note that while the choice of machine makes only a small difference in
complexity, the sets $R_{K_{U_1}}$ and $R_{K_{U_2}}$ could potentially be
quite different.  Thus, we cannot talk about a given string~$\sigma$ being
``random'' without specifying a machine.  In this paper, we look at how
different the sets of random strings with respect to different machines can
be.

It is known that both~$R_C$ and~$R_K$ are Turing complete, regardless of the
choices of universal machines; see Li and
Vit\'anyi~\cite[Exer\-cise~2.7.7]{LiVitanyi} for details.  (The exercise
states that the set $\{\langle x,y\rangle \mid C(x)\leq y\}$ is Turing
complete, but the proof uses only~$\ov R_C$.  It is not difficult to extend
the proof to the prefix-free case, as the machine built to compress strings
in the exercise can easily be made to be prefix-free.)  In fact, by the same
argument,~$R_C$ and~$R_K$ are always bounded Turing-complete (or,
bT-complete, for short). That is, we can computably find a bound for the use
function in the computation that reduces the halting set to~$R_C$
or~$R_K$.\footnote{Note that bT-reducibility is also known as weak
truth-table reducibility, or wtt-reducibility.} Thus, comparing the Turing
degrees or bounded Turing degrees of two sets of random strings will not help
to differentiate them. So we turn instead to truth-table reducibility, the
next finer reducibility.

Truth-table reducibility is a strengthening of Turing reducibility and
bT-reduci\-bi\-lity.  For an arbitrary Turing functional~$\Phi$, there is no
computable way to know for which oracles~$A$ and which input~$m$ $\Phi^A(m)$
converges.  There is also no way to know how much of the oracle is needed to
perform a given computation.  For a truth-table reduction (tt-reduction),
this information can be computably known.  There are two standard ways of
defining a truth-table reduction.  One way is as a {\em total Turing}
reduction.  That is, a Turing functional~$\Psi$ is a tt-reduction if for
every oracle $A\in 2^\omega$,~$\Psi^A$ is a total function.  The other way to
define a truth-table reduction is using truth tables.  Each
tt-reduction~$\Psi$ is given by two computable functions,~$f$ and~$g$.  The
value~$f(m)$ gives a number that can be thought of as a bound for the use of
the computation~$\Psi^A(m)$ for any oracle~$A$. The value~$g(m)$ gives a code
that tells us, for each $\sigma\in 2^{f(m)}$, what the value of
$\Psi^\sigma(m)$ is.  Thus, we can think of~$f$ and~$g$ as defining a table
whose rows consist of every string~$\sigma$ of length~$f(m)$ and the
corresponding output $\Psi^\sigma(m)$.  If $A=\Psi^B$ for a
tt-reduction~$\Psi$, we write $A\lett B$.

We can effectively list the tt-reductions by also including some reductions
that are not total and are therefore not tt-reductions.  We let
$\{\Psi_i\}_{i\in \omega}$ be a listing of the tt-reductions in the following
way.  Let $i=\langle i_f, i_g\rangle$, where $(x,y) \mapsto \langle
x,y\rangle$ is the standard Cantor pairing function from $\omega\times
\omega$ to~$\omega$. Let~$\Psi_i$ be the reduction given by the functions~$f$
and~$g$ as above, where $f=\phi_{i_f}$ and $g=\phi_{i_g}$ (and where
$\{\phi_e\}_{e\in\omega}$ is the standard listing of the partial computable
functions).  We say that the truth table for~$\Psi_i(m)$ has been defined
after~$s$ steps if $\phi_{i_f}(m)[s]$ and $\phi_{i_g}(m)[s]$ both converge
and $\phi_{i_g}(m)$ codes the values for the rows of the table given by
$\phi_{i_f}(m)$.  If either function does not converge or if the functions
cannot be interpreted as giving a truth table, then the truth table is
undefined.

It is not hard to show that $A\lett B$ if and only if $A\leT B$ via a Turing
reduction that runs in a computably bounded time.  Truth-table reductions are
thus closely connected to computer science.

In our work below, we build on ideas from two beautiful theorems on the
tt-degrees of sets of random strings. The first is about the set of strings
random with respect to plain complexity:

\begin{thm}[Kummer~\cite{Kummer}]\label{KummerThm}
$R_C$ is truth-table complete.
\end{thm}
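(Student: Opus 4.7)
The content of the theorem is that $\es'\lett R_C$; the converse $R_C\lett\es'$ is immediate, since $R_C$ is $\Pi^0_1$ and therefore many-one (and so tt-) reducible to $\es'$.

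For the hard direction, the plan is a counting/compression argument. For each input $n$, the tt-query set $F_n$ is a uniformly computable finite collection of strings at carefully chosen length(s), and the decision rule reads $\es'(n)$ from $R_C\cap F_n$. To force such a rule to exist, I would construct in parallel a plain machine $M$ via the Recursion Theorem, so that $M$ knows its own simulation-overhead constant $c$ in the fixed universal machine $U$. The machine $M$ monitors the enumeration of $\es'$; when $n$ enters $\es'$ at some stage, $M$ commits fresh short $M$-descriptions (of length strictly below $\ell(n)-c$, so that $C\le C_M+c$ pushes the $C$-values strictly below $\ell(n)$) to a prescribed collection of strings in $F_n$, ejecting them from $R_C$. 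If $n$ never enters $\es'$, $M$ takes no such action, and those strings may remain random.

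The main obstacle, and the technical heart of Kummer's proof, is that $R_C\cap F_n$ does not depend on $M$ alone: the universal machine $U$ may produce its own \emph{background} compressions at the relevant lengths, independent of whether $n\in\es'$. I would handle this with a careful gap argument: choose $\ell(n)$ to grow fast enough in $n$, and distribute $M$'s total short-description budget across all $n$ sparsely enough, so that the number of possible background compressions $U$ can provide at length $\ell(n)$ is provably dominated by the size of $M$'s targeted intervention. This arranges for the cardinality of $R_C\cap F_n$ to fall in one of two disjoint, uniformly computable intervals depending on whether $n\in\es'$, so that the decision rule simply counts $|R_C\cap F_n|$ and classifies. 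Making this budget-versus-background inequality quantitative, and propagating it through the choice of $\ell(n)$ and the Recursion-Theoretic construction of $M$, is where most of the work lies.
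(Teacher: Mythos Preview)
The paper does not prove this theorem; it is cited from Kummer~\cite{Kummer}. However, Section~\ref{sec_nominpair} sketches the main idea of Kummer's argument when adapting it to the prefix-free setting, so there is something to compare against.

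Your overall framing---build a machine $M$ via the Recursion Theorem, compress designated strings in $F_n$ when $n$ enters $\es'$, and recover $\es'(n)$ by counting $|R_C\cap F_n|$---is the right shape. The gap is in your treatment of the background compressions. You propose to choose $\ell(n)$ large and sparse so that ``the number of possible background compressions $U$ can provide at length $\ell(n)$ is provably dominated by the size of $M$'s targeted intervention.'' This cannot be made to work. At length $\ell$, the number of nonrandom strings is of order $2^{\ell - C(\ell)}$, and since $C(\ell)$ can be $O(1)$ for infinitely many $\ell$, the background may be a constant fraction of $2^\ell$. Meanwhile $M$'s budget of descriptions shorter than $\ell - c$ is at most $2^{\ell - c}$. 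Growing $\ell(n)$ makes the background larger, not smaller relative to your intervention; there is no gap to exploit.

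Kummer's actual mechanism, as the paper outlines, is not to dominate the background but to \emph{locate} it. The counting bound $|\{\sigma\in 2^n : C(\sigma)<n\}| < 2^{n - C(n) + c}$ lets one partition the possible count values into a fixed number (about $2^{c+d}$) of regions of width $2^{n - g(n) - d}$, for a suitable computable $g$. There is then a maximal region index $e$ such that the background count lands in region $e$ for infinitely many $n$; this single constant $e$ is hardwired into the reduction. To signal $n\in\es'$, $M$ compresses exactly one region's worth of currently random strings at the relevant length, pushing the count across a region boundary. The tt-reduction simply tests whether the count exceeds the region-$e$ threshold. What you are missing is precisely this region-and-threshold idea: the reduction never bounds the background absolutely, it only needs to know (up to a finite constant) in which region the background sits, and then nudge it by one unit.
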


\noindent Kummer's theorem does not depend on the choice of universal machine
used to define~$R_C$.  Thus, every \ce set is contained in $\bigcap_U\{A \ |
\ A\lett R_{C_U}\}$, where the intersection is taken over every universal
machine.  This does not hold for the prefix-free case, by the second result:

\begin{thm}[Muchnik~\cite{Muchnik}]\label{MuchnikThm}
There exists a prefix-free universal machine~$U$ such that~$R_{K_U}$ is not
truth-table complete.
\end{thm}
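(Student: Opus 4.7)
The plan is to build $U$ by a stage-by-stage construction that diagonalizes against every tt-reduction $\Psi_e$, ensuring $\Psi_e^{R_{K_U}}\neq\emptyset'$ for each $e$ and hence that $R_{K_U}$ is not tt-complete. Since $R_{K_U}$ is co-c.e., tt-completeness would amount to $\emptyset'\leq_\text{tt}R_{K_U}$, which is what will be precluded.

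First, fix an auxiliary universal prefix-free machine $V$ and declare $U(1\tau)=V(\tau)$ whenever $V(\tau)\!\downarrow$. This costs at most $1/2$ of the Kraft weight and gives $K_U(\sigma)\le K_V(\sigma)+1$, so $U$ is universal. The remaining half, carried by descriptions beginning with $0$, is reserved for the requirements $R_e\colon\Psi_e^{R_{K_U}}\neq\emptyset'$, with $R_e$ allotted a budget $w_e=2^{-(e+2)}$.

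The strategy for $R_e$ maintains a sparse sequence of candidate witnesses $m_e^{(0)}<m_e^{(1)}<\cdots$, chosen so that every query of $\Psi_e(m_e^{(j)})$ has length at least $e+j+3$; thus killing a single query by making it non-random costs Kraft weight at most $2^{-(e+j+2)}$. With active witness $m=m_e^{(j)}$, query set $Q$, and truth table $T\colon 2^Q\to\{0,1\}$, at each stage $s$ set $S_s=Q\cap R_{K_U}[s]$. If $T(S_s)=\emptyset'(m)[s]$, search for some $A\subseteq S_s$ with $T(S_s\setminus A)\neq\emptyset'(m)[s]$; if such an $A$ exists, enumerate into $U$ a $0$-prefixed description of length $|\sigma|-1$ for each $\sigma\in A$, flipping the reduction's output. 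If $\emptyset'(m)$ later changes and forces another flip, attempt it; otherwise retire $m$ and advance to $m_e^{(j+1)}$. A given query can be killed only once, so $T(S_s)$ changes at most $|Q|$ times and $\emptyset'(m)$ changes at most once, giving total weight $O(|Q|^2\cdot 2^{-(e+j)})$ per witness, which sums geometrically well below $w_e$.

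The hard part will be ruling out the pathological case where every witness fails---either because $T_{m_e^{(j)}}$ is constant with value $\emptyset'(m_e^{(j)})$, or because the reduction becomes locked on the wrong row after $\emptyset'$ changes. I would handle this by choosing $\{m_e^{(j)}\}$ not just sparsely but adaptively, skipping inputs whose tt-table has already committed to the wrong value, and leveraging the $1$-completeness of $\emptyset'$ to arrange the witness sequence so that $\emptyset'$ restricted to it is still noncomputable; then any systematic failure would compress into a computable procedure computing $\emptyset'$ on this sequence via the (computable) tt-tables of $\Psi_e$, contradicting the incomputability of $\emptyset'$. The verifications that $U$ is prefix-free and universal follow immediately from the Kraft weight accounting, and satisfaction of all $R_e$ yields $\emptyset'\nleq_\text{tt} R_{K_U}$, so $R_{K_U}$ is not tt-complete, as required.
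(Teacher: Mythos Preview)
The paper does not itself prove Muchnik's theorem---it is cited as background---but it identifies and heavily re-uses Muchnik's central idea in Section~\ref{sec_X}: one analyzes symmetric two-player games $G(\epsilon,\delta)$ in which both we and the ``opponent'' (the coding of the fixed universal machine $V$ into $U$) spend Kraft weight to remove strings from $R_{K_U}$, and one uses the determinacy of these finite games to see that the value we can force is \emph{computable} from the current position.

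Your proposal is missing exactly this idea, and the gap is genuine. Your strategy for $R_e$ says: when $T(S_s)=\emptyset'(m)[s]$, flip by killing some $A\subseteq S_s$. But the coding of $V$ is simultaneously removing strings from $S_s$ as well, and nothing in your argument controls the \emph{final} value $T(S_\infty)$. After you flip, the opponent may remove further queries and restore agreement with $\emptyset'(m)$; you can flip again, but this back-and-forth is precisely a game, and you never argue that you win it. Your cost bound $O(|Q|^2\cdot 2^{-(e+j)})$ only tracks your own expenditures and says nothing about the outcome. The ``hard part'' paragraph does not rescue this: the failure of a witness $m$ is the event that the opponent can always drive $T(S_s)$ back to $\emptyset'(m)$, and that event is not decidable from the tt-table alone---it depends on the non-computable future enumeration of $V$. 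So the claim that systematic failure would yield a computable procedure for $\emptyset'$ on the witness sequence is unsupported. What Muchnik's game analysis buys is exactly this: from the current board and a fixed budget $\epsilon$ for each side, one \emph{computes} which bit can be forced, and that is what lets one either diagonalize or conclude the reduction is computable.

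There is also a smaller problem: you assume you can choose witnesses $m_e^{(j)}$ so that every query of $\Psi_e(m_e^{(j)})$ has length at least $e+j+3$, but the query set is determined entirely by $\Psi_e$; you control only $m$, and nothing prevents $\Psi_e$ from querying a fixed short string on every input. The game framework sidesteps this by charging both players by Kraft weight, so short strings are just as expensive for the opponent to move as for you.
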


\noindent In fact, in Theorem~\ref{knight-bishop} we show that $\bigcap_U\{A
\ | \ A\lett R_{K_U}\}=\Delta^0_1$.  That is, the only sets tt-reducible to
every~$R_{K_U}$ are the computable sets. Our proof relies heavily on ideas
that were introduced in Muchnik's proof, in particular, the idea of playing a
game to force the value of a truth-table reduction. It is worth noting that
the proof of Theorem~\ref{thm:AFG} was also inspired by Muchnik's proof.

Kummer's proof served as the basis of our proof of Theorem~\ref{thmfinite},
where we show that for any finite collection of~$R_{K_U}$'s, there is a
noncomputable \ce set tt-reducible to each~$R_{K_U}$. Essentially, we try to
transfer Kummer's coding method to the prefix-free case. While we cannot
code~$\emptyset'$, we do find that we can code some noncomputable \ce set.


\section{\texorpdfstring{There is no tt-minimal pair of $R_{K_U}$'s}%
{There is no tt-minimal pairs of U-random sets}}\label{sec_nominpair}

We first state and prove the following theorem for two~$R_{K_U}$'s and then
generalize it in Theorem~\ref{thmfinite} to the case of finitely
many~$R_{K_U}$'s.

\begin{thm}\label{theorem1}
For any prefix-free universal machines~$U_1$ and~$U_2$, there is a
noncomputable \ce set~$A$ such that $A\lett R_{K_{U_1}}$ and $A\lett
R_{K_{U_2}}$.
\end{thm}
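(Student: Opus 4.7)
The plan is a finite-injury priority construction that builds the noncomputable \ce set $A$ together with two truth-table reductions $\Psi_1,\Psi_2$ satisfying $A=\Psi_j^{R_{K_{U_j}}}$ for $j=1,2$. By the Recursion Theorem, we fix indices for $\Psi_1,\Psi_2$ and for two auxiliary prefix-free machines $M_1,M_2$ in advance; let $c_j$ be the universality constant so that $K_{U_j}(\sigma)\le K_{M_j}(\sigma)+c_j$ for all $\sigma$. Hence issuing an $M_j$-description of length less than $\ell-c_j$ for a string $\sigma$ of length $\ell$ forces $\sigma\notin R_{K_{U_j}}$.

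\textbf{Coding scheme.} For each $n$ and each $j$, the reduction $\Psi_j$ on input $n$ queries a uniformly computable set $F_n^j$ of strings of length $\ell_n^j$, and outputs $1$ iff the number of queried strings that are absent from the oracle is at least a threshold $t_n^j$. Initially, all strings in $F_n^j$ appear $K_{U_j}$-random and $A(n)=0$, so $\Psi_j^{R_{K_{U_j}}}(n)=0$. To enumerate $n$ into $A$, we issue $t_n^j$ short $M_j$-descriptions of strings in $F_n^j$, driving $\Psi_j^{R_{K_{U_j}}}(n)$ to $1$. The lengths $\ell_n^j$ grow quickly enough in $n$ that the total $M_j$-weight $\sum_{n\in A}t_n^j\cdot 2^{-(\ell_n^j-c_j)}$ stays below $1$, so $M_j$ is prefix-free.

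\textbf{Diagonalization.} To make $A$ noncomputable, we meet the standard requirements $N_e\colon A\neq\phi_e$ by giving each $N_e$ a fresh follower $n_e$ (with its own disjoint $F_{n_e}^j$), waiting for $\phi_e(n_e)\downarrow=0$, and then performing the coding action above to enumerate $n_e$ into $A$. Since each $N_e$ acts at most once and only needs to reserve its own region of query strings, the priority argument is finite-injury and no conflict arises between the $N_e$'s.

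\textbf{Main obstacle.} The delicate part is preserving the invariant $\Psi_j^{R_{K_{U_j}}}(n)=A(n)$ throughout the construction: after $F_n^j$ is reserved at some stage $s_n$, its strings may still become $K_{U_j}$-non-random because of descriptions added to $U_j$ by computations unrelated to $M_j$, and this could push the count in $\Psi_j^{R_{K_{U_j}}}(n)$ past $t_n^j$ even when $n\notin A$. We handle this via a Kraft-style counting estimate, in the spirit of Muchnik's and Kummer's arguments: the residual domain weight of $U_j$ after stage $s_n$ is bounded by a small constant, so by picking $F_n^j$ at a sufficiently late stage from a large pool of strings that are $K_{U_j}$-random at stage $s_n$, and by choosing $\ell_n^j$ and $t_n^j$ large enough that this residual weight cannot force more than $t_n^j-1$ elements of $F_n^j$ to leave $R_{K_{U_j}}$ without our intervention, we keep $\Psi_j^{R_{K_{U_j}}}(n)$ equal to $A(n)$ at every stage and both tt-reductions remain valid in the limit.
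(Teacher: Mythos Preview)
Your plan has a genuine gap in the ``Main obstacle'' paragraph, and it is exactly the heart of the matter.

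You assert that ``the residual domain weight of $U_j$ after stage $s_n$ is bounded by a small constant,'' and that by taking $\ell_n^j$ and $t_n^j$ large enough you can prevent the opponent from pushing $t_n^j$ strings of $F_n^j$ out of $R_{K_{U_j}}$. But the residual weight $1-\Omega_{U_j}[s_n]$ is not computably small: its rate of convergence to~$0$ encodes~$\Omega_{U_j}$ itself. Since the tt-reduction must fix $F_n^j$ and $t_n^j$ computably from~$n$, you have only the trivial bound~$1$ on the opponent's remaining weight at the moment you commit. Now observe the tension quantitatively. To stop the opponent from removing $t_n^j$ strings of length $\ell_n^j$ you need $t_n^j\cdot 2^{-(\ell_n^j-1)}>1$, i.e.\ $t_n^j>2^{\ell_n^j-1}$. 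But then \emph{your} cost to code $n$ into $A$ via $M_j$ is at least $t_n^j\cdot 2^{-(\ell_n^j-c_j-1)}>2^{c_j}$, which already exceeds~$1$ for a single~$n$. More generally, whatever threshold you pick, the opponent can meet it using a factor $2^{c_j}$ less weight than you need to meet it through $M_j$, so you cannot simultaneously (a) block the opponent on all $n\notin A$ and (b) keep $\sum_{n\in A} t_n^j\,2^{-(\ell_n^j-c_j)}\le 1$. Invoking ``Kraft-style counting'' or ``Muchnik/Kummer'' does not help here: those arguments depend on a computable handle on how many length-$\ell$ strings can be non-random, and that handle is precisely what is missing from your outline.

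The paper's proof supplies the missing ingredient and proceeds quite differently. It uses Solovay's computable function $g$ with $K(n)\le g(n)$ everywhere and $K(n)=g(n)$ infinitely often, together with Chaitin's counting bound $|\overline{R_j}\cap 2^n|<2^{n-K(n)+c}$, to get a computable proxy for the number of non-random strings of each length. Rather than diagonalizing directly, the construction builds finitely many \ce sets $A_{\langle e,i\rangle}$ indexed by ``regions'' of the counting bound, and chooses $A=A_{\langle e_0,i_0\rangle}$ non-uniformly as the one corresponding to the maximal region hit infinitely often. Noncomputability of $A$ is not forced by meeting $N_e$-requirements; it follows because if $A$ were computable one could enumerate an infinite subset of $\{n:K(n)=g(n)\}$, contradicting Solovay's theorem. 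The tt-reductions then read off membership in $A$ from whether a certain computably specified block of strings has become non-random, and the maximality of the region guarantees this is faithful for cofinitely many inputs. If you want to repair your approach, you will need some device of this kind to replace the non-computable residual-weight estimate.
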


For notational simplicity, let $K_j=K_{U_j}$ and $R_j=R_{K_{U_j}}$ for
$j=1,2$.

We use~$U_1$ as the universal prefix-free machine that gives us the
prefix-free complexity~$K(\sigma)$, so $K(\sigma)=K_1(\sigma)$.  We use the
usual correspondence between finite strings and natural numbers to define
$\KUj(n)$ for $j=1, 2$.  That is,~$\sigma$ is the string corresponding to~$n$
if~$1\sigma$ is the binary representation of $n+1$.
By Chaitin's Counting Theorem~\cite{Chaitin}, there is a constant~$c$ such
that $|\{\sigma\in 2^n \mid \KUj(\sigma)<|\sigma|\}|< 2^{n-K(n)+c}$, for each
$j=1, 2$; that is, the number of length~$n$ strings in~$\ov{R_j}$ is bounded
by $2^{n-K(n)+c}$.

Let~$g(n)$ be the computable function, defined by Solovay in~\cite{Solovay},
with the property that $K(n)\leq g(n)$ for all $n\in \omega$ and such that
$g(n)=K(n)$ on an infinite set.  There is, however, no infinite \ce set on
which $g(n)=K(n)$ (\cite{Solovay}, see~\cite[p.~132]{Gacs}).
We will construct an infinite set~$A$ that is truth-table reducible to
both~$\RKUO$ and~$\RKUT$, and such that if~$A$ is computable, then there is
an infinite \ce set on which $g(n)=K(n)$.  Thus,~$A$ is not computable,
showing that~$\RKUO$ and~$\RKUT$ do not form a minimal pair.  Also note that
there is a constant~$b$ such that $K(n)\leq b+2\log n$, so we may assume that
$g(n)\leq b+2\log n$.

We will simultaneously construct two prefix-free machines~$M_1$ and~$M_2$.
Using the Recursion Theorem, we may assume that we know in advance that the
coding constants of machine~$M_j$ with respect to~$U_j$ are less than some
value~$d$ for each $j=1,2$; that is, $K_{U_j}(\sigma)<K_{M_j}(\sigma)+d$ for
all~$\sigma$ and each $j=1,2$.
%
The purpose of the machines will be to compress strings, which will
force~$U_1$ and~$U_2$ to compress strings, which will allow us to code
information into~$\RKUO$ and~$\RKUT$.

As we have said, our proof is inspired by Kummer's proof of
Theorem~\ref{KummerThm}.  The main idea is that we know by Chaitin's Counting
Theorem that the number of nonrandom strings of length~$n$ (with respect to
either~$U_1$ or~$U_2$) is less than $2^{n-g(n)+c}$ for all~$n$ such that
$g(n)=K(n)$.  We can divide the set
of natural numbers less than
$2^{n-g(n)+c}$ into $2^{c+d}$ many regions of size $2^{n-g(n)-d}$, for
all~$n$.  We know there is some maximal such region such that the size of the
set of nonrandom strings of length~$n$ lies in that region, for infinitely
many~$n$ with $K(n)=g(n)$.
We can code information into~$U_j$ by waiting until $K(n)[s]=g(n)$ and
choosing $2^{n-g(n)-d}$ strings that will be compressed if~$K(n)$ drops
below~$g(n)$.  For almost all~$n$ in the maximal region, these strings will
only be compressed if $K(n)<g(n)$, because otherwise we would contradict the
maximality of the region.  Thus, we are compressing strings to code
information about which elements~$n$ satisfy $K(n)<g(n)$.

We construct our machines by enumerating KC (Kraft-Chaitin) sets.  A KC set
is a \ce set of pairs $\{\langle d_i, \sigma_i\rangle\}_{i\in\omega}$ from
$\omega \times 2^{<\omega}$ such that the weight
$$
\sum_{i \in \omega} 2^{-d_i}
$$
of the set is at most~$1$.  By the KC Theorem, also known as the Machine
Existence Theorem, a KC set determines a prefix-free machine~$M$ such that
$M(\tau_i)=\sigma_i$ with $|\tau_i|=d_i$ for all $i\in \omega$.  Thus, any
universal prefix-free machine must also compress~$\sigma_i$ to length~$d_i$
plus a coding constant.  (The KC Theorem is due to Levin~\cite{Levin},
Schnorr~\cite{Schnorr}, and Chaitin~\cite{Chaitin}.  See
also~\cite[p.~125]{Downey-Hirschfeldt}.)

To build our KC sets, we first build sets~$E^j_n$ such that~$E^j_n$ contains
strings of length~$n$, for $j=1,2$. We then enumerate $\langle n-d, \sigma
\rangle$ into a KC set for each $\sigma\in \bigcup_{n\in\omega}E^j_n$ to
define machine~$M_j$. We will construct~$E^j_n$ so that if $g(n)=K(n)$,
then~$E_n^j$ will be empty; and otherwise (i.e., if $g(n)>K(n)$) we have
$|E^j_n|\leq 2^{n-g(n)-d}< 2^{n-K(n)-d}$. Thus the weight of our KC sets will
be no more than
$$
\sum_{n\in\omega} 2^{-(n-d)}2^{n-K(n)-d}=\sum_{n\in\omega} 2^{-K(n)}\leq1.
$$
Therefore,~$M_1$ and~$M_2$ will indeed be prefix-free machines.

In our construction, in addition to building~$M_1$ and~$M_2$, we will also
build finitely many \ce sets~$A_\ei$.  One of these sets will be our desired
set~$A$, which will be noncomputable and tt-reducible to~$R_1$ and~$R_2$.
However, we do not know which of the sets will be the true set~$A$.

Let $O^j_n[s]=\{\sigma \in 2^n \mid \KUj(\sigma)[s]<n\}$ for $j=1, 2$.  Note
that these are the strings of length~$n$ that have been shown to be outside
of~$R_j$ by stage~$s$.

\subsection{Construction}


{\em Stage $0$}.  Let $l(e,i)=0$ for all $e,i\leq 2^{c+d}$.

\medskip
{\em Stage $s+1$, Part~$1$.} For each pair~$\ei$ with $e,i\leq 2^{c+d}$, in
decreasing order (starting from the largest~$\ei$), check whether there is an
$n\leq s$ such that
\begin{enumerate}[label=(\roman*)]
\item $n$ is unused and $n>b+d+2\log n$,
\item $n\neq m_{{\eip},x'}$ for any $\eip\geq \ei$ and any~$x'$,
\item $e2^{n-g(n)-d}\leq|O^1_n[s]|$,
\item $i2^{n-g(n)-d}\leq |O^2_n[s]|$, and
\item $g(n)= K(n)[s]$.
\end{enumerate}

\noindent In other words, we check whether $g(n)= K(n)[s]$, and if so
we try to find the largest pair~$\ei$ satisfying the above criteria.

If so, then take the least such~$n$ and apply the following steps:

\begin{itemize}
\item Let $S^1_{\ei,l(e,i)}$ be the least~$k$ elements in $2^n-O^1_n[s]$,
where
$$
k=\min\{2^n-|O^1_n[s]|, 2^{n-g(n)-d}\},
$$
and similarly for $S^2_{\ei,l(e,i)}$.
\item Let $m_{\ei,l(e,i)}=n$.
\item Increment $l(e,i)$ by~1.
\end{itemize}

{\em Stage $s+1$, Part~$2$.} If $m_{\ei, x}=n$ for some~$x$, we call~$n$ a
{\em candidate} for~$\ei$. If~$n$ is an unused candidate for any~$\ei$ and
$g(n)>K(n)[s]$ (i.e.,~$K(n)$ decreased from~(v) above when~$n$ was made a
candidate), we declare~$n$ to be {\em used} and let $E^1_n=S^1_{\ei,x}$
where~$\ei$ is the greatest such that~$n$ is a candidate for~$\ei$ and~$x$ is
such that $m_{\ei, x}=n$. Similarly, we define $E^2_n=S^2_{\ei,x}$.

When~$n$ becomes used, for each~$\ei$ such that~$n$ is a candidate for~$\ei$,
we enumerate $\langle n, t\rangle$ into~$A_\ei$, where~$t$ is the stage at
which~$n$ became a candidate for~$\ei$.

End of construction.

\medskip
Let~$M_j$ be the machine that compresses each string in~$E_n^j$ to length
$n-d$, for $j=1,2$.
As explained previously, these machines are guaranteed to be prefix-free by
the KC Theorem.

To see how we can know~$d$ in advance, first note that by the Recursion
Theorem, we can know indices for the KC sets that we are building in order to
define~$M_1$ and~$M_2$.  Since we can effectively go from an index for a KC
set to an index for a machine, by the KC Theorem, we can find indices for the
machines~$M_1$ and~$M_2$.  Since~$U_1$ and~$U_2$ are universal prefix-free
machines, given indices for~$M_1$ and~$M_2$, we can effectively find coding
constants~$d_1$ and~$d_2$ such that $K_{U_j}(\sigma)\leq K_{M_j}(\sigma)
+d_j$ for each $j=1,2$ and $\sigma\in 2^{<\omega}$.  Thus, we can know~$d_1$
and~$d_2$ in advance, so let $d=d_1+d_2+1$.

Note that for almost all~$n$, $n>b+d+2\log n$.  For such~$n$,
$$
n-g(n)-d > b+d+2\log n -g(n) -d \geq 0.
$$
Thus, $2^{n-g(n)-d}\geq 1$. Let~$I$ be the infinite set of~$n$ such that
$n>b+d+2\log n$ and $g(n)=K(n)$.  For each $n\in I$, there is some~$\ei$ such
that conditions (i)-(v) will hold.  Now each $n \in I$ can be a candidate for
each~$\ei$ at most once, and~$n$ will eventually become a candidate for
some~$\ei$ since if~$n$ is not already a candidate for some~$\ei$,~$n$ will
eventually become a candidate for $e=i=0$.  Thus, there is some
largest~$\ei$, which we will call $\eizero$, such that there are infinitely
many elements of~$I$ that become candidates for~$\ei$.  Note that neither
coordinate of~$\eizero$ can be equal to $2^{c+d}$, as we know that for all
$n\in I$, since $g(n)=K(n)$, the set of compressible strings of length~$n$ is
strictly less than $2^{n-g(n)+c}$.

Let $A=A_\eizero$.  We will show that~$A$ is tt-reducible to both~$R_1$
and~$R_2$ and is not computable.  (We will not use the other sets $A_\ei$.)


\subsection{Verification}

\begin{lem}\label{ttlemma}
$A\lett \RKUj$, for $j=1, 2$.
\end{lem}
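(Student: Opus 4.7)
The plan is to exhibit, for each input $\langle n,t\rangle$, a computable truth table whose queries are strings of length at most $n$, and to verify that feeding $\RKUj$ into this truth table returns the characteristic function of $A$. On input $\langle n,t\rangle$ the reduction first computably simulates the construction through stage $t$ to check whether $n$ became a candidate for $\eizero$ at exactly stage $t$; if not, the truth table returns $0$ unconditionally. Otherwise, it reads off from the simulation the finite set $S^j_{\eizero,x}\subseteq 2^n$ assigned at that stage, queries $\RKUj$ on each of its (at most $2^{n-g(n)-d}$) elements, and returns $1$ iff every one of them lies in $\overline{\RKUj}$. Since $\eizero$ is a fixed constant and the simulation is uniformly computable, this defines a genuine tt-reduction.

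For the forward direction, suppose $\langle n,t\rangle\in A$, so $n$ is a candidate for $\eizero$ at stage $t$ and is eventually declared used. When $n$ is used, $M_j$ compresses each $\sigma\in E^j_n = S^j_{\langle e',i'\rangle,x'}$ to length $n-d$, where $\langle e',i'\rangle$ is the greatest pair for which $n$ is a candidate. By the maximality of $\eizero$, only finitely many $n$ are ever candidates for a pair strictly above $\eizero$, so for all but finitely many such $n$ we have $\langle e',i'\rangle=\eizero$ and $E^j_n = S^j_{\eizero,x}$; since $d>d_j$, every $\sigma\in S^j_{\eizero,x}$ satisfies $\KUj(\sigma)\leq n-d+d_j<n$ and hence lies outside $\RKUj$, so the test returns $1$ correctly. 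The finitely many exceptional $n$ (those also claimed by some pair above $\eizero$) are accommodated by overriding the truth table on those specific inputs with the correct constant answer.

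For the reverse direction, the key observation is that for each $n$ that is a candidate for $\eizero$ but never a candidate for any larger pair, the monotone quantity $|O^1_n[s]|$ must stay strictly below $(e_0+1)\cdot 2^{n-g(n)-d}$ throughout every stage before $n$ would be declared used. Indeed, if $|O^1_n[s]|$ ever reached this threshold at some stage $s_0$, then by monotonicity it would stay at least that large at every $s\geq s_0$, and $n$ would then satisfy conditions (i), (iii), (iv), (v) for the pair $\langle e_0+1,i_0\rangle$ forever afterwards: (iv) holds because $|O^2_n[s]|\geq i_0\cdot 2^{n-g(n)-d}$ is inherited monotonically from stage $t$, (v) holds because $K(n)[s]=g(n)$ for an unused $n$, and (i) holds because $n$ is assumed unused; combined with (ii) by assumption, and the fact that only finitely many $n^*$ are ever claimed as candidates for $\langle e_0+1,i_0\rangle$ (again by maximality of $\eizero$), the ``least such $n^*\leq s$'' mechanism in the construction must eventually pick $n$ itself, contradicting our assumption that $n$ is never a candidate for a larger pair. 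Consequently, if $\langle n,t\rangle\notin A$ because $n$ is a candidate for $\eizero$ at stage $t$ but never used, then $|O^1_n\setminus O^1_n[t]|<2^{n-g(n)-d}=|S^1_{\eizero,x}|$, forcing at least one element of $S^1_{\eizero,x}$ to remain in $\RKUj$ and the test to correctly return $0$; the symmetric argument using $\langle e_0,i_0+1\rangle$ handles $j=2$.

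The main obstacle is the monotonicity/exhaustion step in the reverse direction: one must argue that although at each stage the least-$n^*$ rule could prefer some smaller $n^*<n$ over $n$ itself, the finite total number of candidates of $\langle e_0+1,i_0\rangle$ forces the supply of such smaller $n^*$ to be exhausted, after which $n$ would inevitably be promoted to a larger-pair candidate, giving the desired contradiction. Once this is pinned down and combined with the forward-direction override for the finite exceptional set, the truth table computes $A$ exactly from $\RKUj$.
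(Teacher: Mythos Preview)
Your reduction and its verification follow the same outline as the paper's, but there is a genuine gap in the forward direction. You repeatedly invoke ``by the maximality of $\eizero$, only finitely many $n$ are ever candidates for a pair strictly above $\eizero$.'' That is not what maximality says: $\eizero$ is defined as the largest pair for which infinitely many elements \emph{of $I$} become candidates, and it places no bound on candidates $n\notin I$. In the forward direction the relevant $n$ satisfy $\langle n,t\rangle\in A$, so $n$ is used, hence $g(n)>K(n)$ and $n\notin I$; your finiteness claim simply does not apply there, and the ``finite override'' cannot be carried out. The paper fixes this by showing the forward implication holds for \emph{every} $x$, with no exceptions: by condition~(ii) of the construction, if $n$ is a candidate for both $\eizero$ and some larger $\ei$, the $\eizero$-candidacy came first, so $S^j_{\ei,y}$ is chosen at a later stage than $S^j_{\eizero,x}$; anything in $S^j_{\eizero,x}\setminus S^j_{\ei,y}$ is therefore already in $O^j_n\subseteq\overline{R_j}$, and the rest lies in $E^j_n$ and is compressed. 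Thus $S^j_{\eizero,x}\subseteq\overline{R_j}$ unconditionally.

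Your reverse direction is essentially right, but the exhaustion step is misstated: you again appeal to ``the finite total number of candidates of $\langle e_0+1,i_0\rangle$,'' which is the same unjustified claim. The correct reason the least-$n^*$ rule must eventually select $n$ is simply that there are only $n$ numbers below $n$ and each can be chosen for $\langle e_0+1,i_0\rangle$ at most once (by condition~(ii)). Note also that the exceptional $n$ in the reverse direction \emph{are} finitely many, because there $n$ is unused and hence lies in $I$, so maximality of $\eizero$ does apply---but this distinction between the two directions is exactly what your argument elides.
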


The proof depends on the following two sublemmas.

\begin{slem}
\label{Esublemma} For all~$n$ such that there exists~$x$ with $\meizerox=n$,
we have that $g(n)>K(n)$ implies $E^j_n=S^j_{\ei,y}$ for some $\ei\geq
\eizero$ and some $y\in \omega$, and that $g(n)=K(n)$ implies
$E^j_n=\emptyset$ for $j=1, 2$.
\end{slem}

\begin{proof}
In the construction, if $g(n)>K(n)$,~$n$ will become used, and~$E^j_n$ will
be defined.  Since~$n$ is a candidate for $\eizero$, it must become a
candidate when it is still unused, so by the time it becomes used, the
greatest~$\ei$ for which~$n$ is a candidate is at least~$\eizero$.
Thus,~$E^j_n$ will be defined as $S^j_{\ei,y}$ for some $\ei\geq \eizero$. If
$g(n)=K(n)$, then~$n$ will never become used, and~$E^j_n$ will never be
nonempty.
\end{proof}

\begin{slem}\label{containment}
For almost all~$x$, for each $j=1,2$,
$$
S^j_{\eizero,x}\subseteq \overline{\RKUj}\iff
    \langle \meizerox, s\rangle \in A,
$$
where~$s$ is the stage at which $\meizerox$ was defined.
\end{slem}

\begin{proof}
$(\Leftarrow)$ Since $\langle n,s\rangle=\langle \meizerox, s\rangle \in A$,
we have $g(n)>K(n)$.  By  Sublemma~\ref{Esublemma}, $E^j_n=S^j_{\ei,y}$ for
some $\ei\geq \eizero$.   Since all strings in~$E^j_n$ are compressed
by~$M_j$ to length $n-d$, they are compressed by~$\KUj$ for $j=1, 2$ to
length less than $n-d+d=n$.    Thus, $S^j_{\ei,y}\subseteq \overline{\RKUj}$.
If $\ei=\eizero$, we are done, since in that case, $x=y$.  Otherwise, $\ei>
\eizero$, so $S^j_{\ei,y}$ must have been defined after $S^j_{\eizero, x}$
was defined because otherwise condition (ii) of the construction would not
allow $\meizerox$ to be defined as~$n$.
Thus, $S^j_{\eizero, x}\subseteq \overline{\RKUj}$ as well, because anything
in $S^j_{\eizero, x}-S^j_{\ei,y}$ must have already been seen to be nonrandom
with respect to~$R_j$ by the time $S^j_{\ei,y}$ was defined, since such
strings are in $O^j_n=\ov{R_j}\cap 2^n$.

$(\Rightarrow)$ Let~$x_0$ be such that for all $x\geq x_0$, if $\meizerox=n$
for some $n\in I$, then~$n$ never becomes a candidate for any $\ei>\eizero$.
Let $x\geq x_0$ and $n=\meizerox$. Let $S^j_{\eizero,x}$ become defined at
stage~$s$. Suppose $S^j_{\eizero,x}\subseteq\overline{\RKUj}$ for either
$j=1$ or~$2$. Without loss of generality, assume $j=1$.
Then at some stage $t>s$, all strings in $S^1_{\eizero,x}$ are in $O^1_n[t]$.
This means $|O^1_n[t]|\geq |O^1_n[s]|+k$, where~$k$ is the number of strings
in $S^1_{\eizero,x}$, all of which still appeared random at stage~$s$. Recall
that the number of elements in $S^1_{\eizero,x}$ was chosen to be the minimum
of $2^n-|O^1_n[s]|$ and $2^{n-g(n)-d}$. In the former case, all strings of
length~$n$ are non-random, which is impossible. So $S^1_{\eizero,x}$ has size
$2^{n-g(n)-d}$, and by condition (iii), $|O^1_n[s]|\geq {e_0}2^{n-g(n)-d}$,
so $|O^1_n[t]|\geq (e_0+1)2^{n-g(n)-d}$. We also have that $|O^2_n[t]|\geq
{i_0}2^{n-g(n)-d}$, so condition (iii) and (iv) will hold for some~$\ei$ with
$\ei>\eizero$. If~$n$ has not yet become used when this happens, all of
conditions (i)-(v) will hold for this~$\ei$ and~$n$, so~$n$ will eventually
become a candidate for~$\ei$. Thus, $n\notin I$. Therefore, $g(n)>K(n)$, and
so at some stage in the construction,~$n$ will become used while it is a
candidate for~$\eizero$, and at this point $\ns=\langle \meizerox, s\rangle$
will be enumerated into~$A$.
\end{proof}

\begin{proof}[Proof of Lemma~\ref{ttlemma}]
To determine if~$\ns$ is in~$A$, run the construction to see if $\meizerox$
is defined as~$n$ for some~$x$ at stage~$s$.  If not, then $\ns\notin A$.  If
so, then $\ns\in A\iff S^j_{\eizero,x}\cap \RKUj=\es$.  This works for both
$j=1$ and~$2$ by Sublemma~\ref{containment}.
\end{proof}

\begin{lem} $A$ is not computable.
\end{lem}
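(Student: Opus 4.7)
The plan is to assume $A$ is computable and extract an infinite c.e.\ set on which $g(n) = K(n)$, contradicting Solovay's result that no such set exists. Recall that $A = A_{\langle e_0, i_0 \rangle}$ consists precisely of pairs $\langle n, t \rangle$ such that $n$ becomes a candidate for $\langle e_0, i_0 \rangle$ at stage $t$ and later becomes used. The pivotal observation is that a candidate $n$ for $\langle e_0, i_0 \rangle$ fails to become used if and only if $K(n)$ never drops below $g(n)$; combined with condition~(i) at the candidate stage (which enforces $n > b + d + 2\log n$) and the universal inequality $K \leq g$, this happens if and only if $n \in I$.

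Assuming $A$ is computable, I would define a c.e.\ set $C$ as follows. Run the construction, and whenever some $n$ becomes a candidate for $\langle e_0, i_0 \rangle$ at its (unique) stage $t_n$, use the decidability of $A$ to check whether $\langle n, t_n \rangle \in A$, enumerating $n$ into $C$ precisely if $\langle n, t_n \rangle \notin A$. This is effectively c.e.\ since the triggering event is recognizable within the construction and membership in $A$ is decidable by assumption.

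It then remains to verify that $C$ is infinite and contained in $I$. For infinitude, by the maximality in the definition of $\langle e_0, i_0 \rangle$, infinitely many $n \in I$ become candidates for $\langle e_0, i_0 \rangle$; for each such $n$, $g(n) = K(n)$ prevents Part~$2$ from ever firing on $n$, so $\langle n, t_n \rangle \notin A$ and $n \in C$. For $C \subseteq I$, any $n \in C$ was a candidate at stage $t_n$, so condition~(v) gives $g(n) = K(n)[t_n]$; since $\langle n, t_n \rangle \notin A$, $n$ never becomes used, which means $K(n)[s] \geq g(n)$ for all $s \geq t_n$, and hence $K(n) = g(n)$. Together with condition~(i) at stage $t_n$, this places $n$ in $I$. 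Thus $C$ is an infinite c.e.\ set on which $g = K$, contradicting Solovay's theorem.

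The only bookkeeping subtlety — not a serious obstacle — is that an $n \in I$ may, after being a candidate for $\langle e_0, i_0 \rangle$, subsequently become a candidate for some larger pair as well, but it still never becomes used, so $\langle n, t_n \rangle$ for the specific $\langle e_0, i_0 \rangle$-candidate stage $t_n$ indeed stays out of $A$. One simply has to confirm this directly from the enumeration rule in Part~$2$ of the construction.
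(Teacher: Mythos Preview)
Your proof is correct and follows essentially the same approach as the paper: both assume $A$ is computable, use this to enumerate the set of $n$ that become candidates for $\langle e_0,i_0\rangle$ but never become used, and argue this yields an infinite c.e.\ subset of $I$, contradicting Solovay. The paper phrases it via an auxiliary computable set $B$ of candidate pairs and takes $C$ as the projection of $B\smallsetminus A$, but this is just a cosmetic repackaging of your direct enumeration.
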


\begin{proof}
Suppose~$A$ is computable.  Let
$$
B=\{\ns \mid \text{$n$ becomes a candidate for $\eizero$ at stage~$s$}\}.
$$
Obviously~$B$ is computable, and so $B-A$ is computable. $B-A$ is the set of
all candidate pairs $\ns$ such that $n\in I$. Let $C=\{n \mid (\exists s)[\ns
\in B- A]\}$. Then~$C$ is an infinite \ce set such that $C\subseteq I$
(here~$C$ is infinite by the choice of $\eizero$). However,
Solovay~\cite{Solovay} showed that~$I$ contains no infinite \ce set, so this
is a contradiction. Thus~$A$ is not computable, proving the theorem.
\end{proof}

This proof can easily be modified to accommodate any finite set of universal
machines by replacing the pairs~$\ei$ with $m$-tuples.  This gives the
following:

\begin{thm}\label{thmfinite}
For any finite set of prefix-free universal machines $\{U_j\}_{j=1,\ldots,m}$
there is a noncomputable \ce set~$A$ such that $A\lett R_{K_{U_j}}$ for each
$j=1,\ldots, m$.
\end{thm}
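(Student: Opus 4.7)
My plan is to follow the proof of Theorem~\ref{theorem1} essentially verbatim, replacing the pairs $\langle e,i\rangle$ by $m$-tuples $\vec{e}=\langle e_1,\dots,e_m\rangle$ with each coordinate bounded by $2^{c+d}$, and simultaneously constructing $m$ prefix-free machines $M_1,\dots,M_m$. The Recursion Theorem, applied to the tuple of indices for the $m$ KC sets that we enumerate, produces coding constants $d_1,\dots,d_m$ with $K_{U_j}(\sigma)\leq K_{M_j}(\sigma)+d_j$ for each $j$; we then set $d=d_1+\cdots+d_m+1$. Chaitin's Counting Theorem applied to each $K_{U_j}$ yields a common constant $c$ (take the maximum) with $|\{\sigma\in 2^n:K_{U_j}(\sigma)<|\sigma|\}|<2^{n-K(n)+c}$ for all $n$ and all $j\leq m$.

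The construction is the obvious generalization. At Stage $s+1$, Part~$1$, I iterate over $m$-tuples $\vec{e}$ in decreasing order under the iterated Cantor pairing and look for the largest $\vec{e}$ admitting an $n\leq s$ that satisfies conditions (i), (ii), (v) as before, together with
\[
 e_j\cdot 2^{n-g(n)-d}\leq |O^j_n[s]|\quad\text{for each } j=1,\dots,m,
\]
where $O^j_n[s]=\{\sigma\in 2^n:K_{U_j}(\sigma)[s]<n\}$. On success, I define $S^j_{\vec{e},l(\vec{e})}$ for each $j\leq m$ to be the least $\min\{2^n-|O^j_n[s]|,2^{n-g(n)-d}\}$ elements of $2^n-O^j_n[s]$, set $m_{\vec{e},l(\vec{e})}=n$, and increment $l(\vec{e})$. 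Part~$2$ is unchanged: when $K(n)$ drops below $g(n)$ the candidate $n$ becomes used, $E^j_n=S^j_{\vec{e},x}$ is defined (for every $j$) from the greatest $\vec{e}$ for which $n$ is a candidate, and $\langle n,t\rangle$ is enumerated into $A_{\vec{e}}$ for every $\vec{e}$ for which $n$ is a candidate.

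The weight computation is identical for each $M_j$, since $|E^j_n|\leq 2^{n-g(n)-d}$ and its elements are compressed to length $n-d$, giving the same bound $\sum_n 2^{-K(n)}\leq 1$ per machine. By pigeonhole over the finitely many $m$-tuples, there is a maximal (in the pairing order) $\vec{e}_0$ for which infinitely many $n\in I$ become candidates, and Chaitin's bound forces every coordinate of $\vec{e}_0$ to be strictly less than $2^{c+d}$. I then set $A=A_{\vec{e}_0}$.

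For the verification, Sublemma~\ref{Esublemma} carries over for each $j$ without change. The analog of Sublemma~\ref{containment} asserts, for each $j\leq m$ and almost all $x$,
\[
 S^j_{\vec{e}_0,x}\subseteq\overline{R_{K_{U_j}}}\iff \langle m_{\vec{e}_0,x},s\rangle\in A,
\]
by the same argument. The one point requiring a moment's care --- and hence the main obstacle --- is the $(\Rightarrow)$ direction: assuming $S^j_{\vec{e}_0,x}$ eventually becomes nonrandom, I need to exhibit a strictly larger $m$-tuple for which $n=m_{\vec{e}_0,x}$ would have become a candidate, contradicting maximality. Incrementing just the $j$-th coordinate of $\vec{e}_0$ suffices: by monotonicity of $|O^{j'}_n[t]|$ in $t$, the conditions for the other $m-1$ coordinates remain satisfied at the later stage $t$, and the resulting tuple is strictly larger because the iterated Cantor pairing is monotone in each coordinate. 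Noncomputability of $A$ then follows verbatim from Solovay's theorem that $I$ contains no infinite c.e.\ subset.
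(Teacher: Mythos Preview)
Your proposal is correct and follows exactly the approach the paper indicates: the paper's own proof is the single sentence that the argument for Theorem~\ref{theorem1} ``can easily be modified to accommodate any finite set of universal machines by replacing the pairs~$\ei$ with $m$-tuples,'' and you have carried out precisely this replacement, correctly handling the one nontrivial point (monotonicity of the iterated Cantor pairing in each coordinate for the $(\Rightarrow)$ direction of Sublemma~\ref{containment}).
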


While sets of random strings cannot form a tt-minimal pair, there are Turing
complete sets that do form a tt-minimal pair, as we show in the following
section.

\section{A tt-minimal pair of Turing complete sets}\label{minpair}

In Theorem~\ref{theorem1}, we showed that there is no pair of sets of random
strings $R_{K_{U_1}}$ and $R_{K_{U_2}}$ that form a minimal pair in the
truth-table degrees, or even in the \ce truth-table degrees. We know that
$R_{K_U}$ is always Turing complete. If no two Turing complete sets ever form
a minimal pair in the tt-degrees, Theorem~\ref{theorem1} would be a trivial
corollary. However, this is not the case, as we show in this section. By a
different method, Degtev~\cite{Degtev} proved that there are Turing complete
\ce sets that form a minimal pair in the \emph{c.e.\ }truth-table degrees. We
produce a minimal pair in the full structure of the tt-degrees.

\begin{thm}\label{minpairthm}
There exist Turing complete c.e.\ sets~$A_1$ and~$A_2$ whose tt-degrees form a
minimal pair.
\end{thm}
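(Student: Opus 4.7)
The plan is an infinite-injury priority argument on a tree of strategies. I would build c.e.\ sets $A_1,A_2$ satisfying:
\begin{itemize}
\item $R_e=R_{\langle i,j\rangle}$: if $\Psi_i,\Psi_j$ are total tt-reductions with $\Psi_i^{A_1}=\Psi_j^{A_2}$, then this common function is computable;
\item $P_n$: for each $n$, if $n\in\emptyset'$, enumerate an element into $A_k$ from a reserved infinite computable column $C_n^k\subseteq\omega$ (with $\{C_n^k\}$ pairwise disjoint), so that $n\in\emptyset'\iff C_n^k\cap A_k\neq\emptyset$, giving $\emptyset'\leT A_k$ for $k=1,2$.
\end{itemize}

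On the tree, each $R_e$-node has two outcomes, indicating whether the length of agreement $\ell(e,s)=\max\{n\mid \Psi_{i,s}^{A_{1,s}}(m)=\Psi_{j,s}^{A_{2,s}}(m)\text{ for all }m<n\}$ is unbounded or bounded, and each $P_n$-node has two outcomes indicating whether $P_n$ has yet acted. Each $R_e$-strategy follows the classical Lachlan--Yates length-of-agreement scheme, with restraints computed directly from the computable tt-use bounds $f_i,f_j$; this avoids the oracle-use sup required in the usual Turing-reducibility version. A $P_n$-strategy lying below an infinite-outcome of some ancestor $R_e$-node acts only at expansionary stages for that $R_e$, enumerating into $A_k$ the least element of $C_n^k$ above the current restraint. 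Since $\ell(e,\cdot)$ is monotone non-decreasing, these choices survive subsequent restraint growth.

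Turing completeness of each $A_k$ is then immediate: the condition $C_n^k\cap A_k\neq\emptyset$ is both $\Sigma^0_1(A_k)$ and, by pairwise disjointness of columns together with $P_n$ being the sole enumerator into $C_n^k$, $\Pi^0_1(A_k)$, hence $\Delta^0_1(A_k)$. For the minimal-pair property along the true path: if $R_e$'s true outcome is infinite, then $\Psi_i^{A_1}=\Psi_j^{A_2}=g$ is total, and I would compute $g(m)$ by searching for a sufficiently late $R_e$-expansionary stage past~$m$ and reading off the common value, using the Lachlan--Yates alternation (alternately restraining the $A_1$- and $A_2$-sides at consecutive expansionary rounds) to certify the answer.

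The main obstacle is reconciling the potentially unbounded growth of $R_e$'s restraint with the perpetual coding demanded by Turing completeness. The resolution leans on the explicit computability of the tt-uses: once $\ell(e,s)>m$, the current restraint on $A_k$ already exceeds $f_k(m)$, so any subsequent coding by a lower-priority $P_n$ (with $n>e$), being placed above that restraint, automatically leaves $A_k\res f_k(m)$ undisturbed. The priority tree absorbs the finite injuries coming from higher-priority $P_n$'s ($n<e$) into its branching structure, ensuring that along the true path each $R_e$ is injured only finitely often. Standard priority-tree bookkeeping then verifies that every requirement is met along the true path, completing the construction.
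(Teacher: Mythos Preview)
There is a genuine gap in the coding argument for Turing completeness. Your reduction is ``$n\in\emptyset'$ iff $C_n^k\cap A_k\neq\emptyset$'', and you assert this predicate is $\Delta^0_1(A_k)$ because it is ``both $\Sigma^0_1(A_k)$ and $\Pi^0_1(A_k)$''. The $\Sigma^0_1(A_k)$ direction is clear, but no $\Pi^0_1(A_k)$ definition is available: the column $C_n^k$ is infinite, and the single witness $P_n$ enumerates (namely the least element of $C_n^k$ above the current restraint) is determined by the stage at which $n$ enters $\emptyset'$, which cannot be bounded in advance. Given only $A_k$ as oracle, you have no computable bound on where in $C_n^k$ to look, so ``$C_n^k\cap A_k=\emptyset$'' genuinely requires an unbounded universal quantifier. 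Thus your $A_k$ is c.e.\ and satisfies $n\in\emptyset'\Leftrightarrow C_n^k\cap A_k\neq\emptyset$, but this only shows $\emptyset'$ is $\Sigma^0_1$, not that $\emptyset'\leq_T A_k$.

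The point is not cosmetic: it is exactly the flexibility of your coding (choose \emph{any} element of an infinite column, hence always above every restraint) that made the minimal-pair side trivial, and exactly that flexibility that destroys the Turing reduction. The paper's construction faces this tension head-on. It codes a complete set $D$ via movable markers $\gamma_i(n,s)$ whose positions are recoverable from $A_i$ (so $D\leq_T A_i$ via the usual ``wait until $A_i$ settles below the marker'' argument), but then when $n$ enters $D$ the marker at its \emph{current} position must go into $A_i$, possibly below an $\mathcal R_e$-use. This forces the paper to abandon the passive Lachlan--Yates restraint scheme in favour of an active strategy: kick markers past the use when a truth table appears, then search for and force a disagreement by enumerating synchronized pairs of markers (Rule~6: enumerating $\gamma_i(n,s)$ into $A_i$ requires simultaneously enumerating $\gamma_j(n,s)$ or $\gamma_j(n^+,s)$ into $A_j$). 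The verification then shows that if no disagreement can be forced, the common value survives all legal future marker enumerations. Your invocation of the ``Lachlan--Yates alternation'' is also misplaced here, since every coding action hits both sides at once; but that is secondary to the coding gap above.
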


\begin{proof}
Let $\{\Psi_i\}_{i\in\omega}$ be a computable listing of all partial
truth-table reductions.

Let~$D$ be a Turing complete \ce set with a computable enumeration
$\{D_s\}_{s\in\omega}$ such that if~$n$ enters~$D$ at stage~$s$, then~$m$
enters~$D$ at stage~$s$ for all $m\in [n,s)$ not yet in~$D$.  Such a set~$D$
can be constructed using a standard movable marker construction.
 We will build~$A_1$ and~$A_2$ as well as Turing functionals~$\Gamma_1$
and~$\Gamma_2$, such that $\Gamma_i^{A_i}=D$ for $i=1,$ 2, satisfying the
following requirements for all~$e$:
\begin{align*}
\mathcal R_e &: \Psi_e^{A_1}=\Psi_e^{A_2}=f \text{ total } \implies
 \text{ $f$ computable}.
\end{align*}
By Posner's trick, the requirements~$\mathcal R_e$ (and the fact
that $A_1 \neq A_2$) suffice to show that the tt-degrees of~$A_1$ and~$A_2$
form a minimal pair, because if $\Psi_i^{A_1}=\Psi_{i'}^{A_2}$, we could
build a single tt-reduction~$\Psi_e$ such that $\Psi_e^{A_1}=\Psi_i^{A_1}$
and $\Psi_e^{A_2}=\Psi_{i'}^{A_2}$.

We will build $\Gamma_i^{A_i}$ in stages with uses $\gamma_i(x,s)$, for
$i=1,$ 2.  In particular, we will treat the use $\gamma_i(x,s)$ as a movable
marker.  The marker $\gamma_i(x,s)$ sits on an element not yet in~$A_i$.  We
may change the value of $\Gamma_i^{A_i}$ by enumerating $\gamma_i(x,s)$
into~$A_i$.  The movement of the markers is subject to the following rules:

\begin{enumerate}
\item If $n<n'$, then $\gamma_i(n,s)<\gamma_i(n',s)$.
\item $\gamma_i(n,s+1)\ne \gamma_i(n,s)$ implies $\gamma(n,s+1)>s+1$,
    where by convention~$s$ exceeds all numbers used in computations at
    stage~$s$. We refer to this action as \emph{kicking}. Moreover, when
    $\gamma_i(s,s)$ is first appointed at the end of stage~$s$, it is
    chosen to be the least element not yet in~$A_i$ that is greater
    than~$s$ and all other markers $\gamma_i(n,s)$.
\item If~$n$ enters~$D$ at~$s$ then we will  enumerate $\gamma_i(n,s)$
    into $A_i[s+1]$.
Once $n\in D_s$ we will no longer define $\gamma_i(n,s)$.  The marker
will be removed.
\item If $\gamma_i(n,s)$ enters $A_i[s]$, so do all currently defined
    $\gamma_i(k,s)$ for all $k\in [n,s)$.
\item Coding of~$D$ is not the only reason $\gamma_i(n,s)$ can move. The
    marker $\gamma_i(n,s)$ may be moved by requirements~$\mathcal R_e$ in
    their attempts to seek satisfaction, but~$\mathcal R_e$ can only move
    $\gamma_i(n,s)$ if $e\le n$. As we will show, a single~$\mathcal R_e$
    can only move a specific $\gamma_i(n,s)$ a finite number of times. If
    $\gamma_i(n,s)$  enters~$A_i$ and $n\not \in D_s$, it will be
    redefined, and as usual, we will kick $\gamma_i(n,s+1)$ to a fresh
    element past $s+1$.
\item If $\gamma_i(n,s)$ enters~$A_i$ then one of $\gamma_j(n,s)$ or
    $\gamma_j(m,s)$ must simultaneously enter~$A_j$ for $j\ne i$, where
    $m>n$ is the smallest $\gamma_j(m,s)$ with $m\not\in D_s$.  That is,
    $\gamma_j(m,s)$ is the least marker still defined for $m>n$.
\item If $\gamma_i(n,s)$ moves or is enumerated, then~$\mathcal R_e$ is
    initialized for $e>n$, meaning that all current values for $f=f_e$
    are discarded, and the strategies for~$\mathcal R_e$ are restarted.
\end{enumerate}

\noindent To achieve~$\mathcal R_e$, we will force disagreements at stage~$s$ between
$\Psi_e^{A_1}$ and $\Psi_e^{A_2}$ whenever possible by enumerating
$\gamma_1(n,s)$ into~$A_1$ and $\gamma_2(m,s)$ into~$A_2$, where~$n$ and~$m$
are at most one defined marker apart, as specified in Rule~6.  According to
the rules that govern marker movement, we also enumerate all larger markers.

Let $\ell(e,s)$ be the length of agreement function given by
$$
\ell(e,s)=\max\{n \mid \Psi_e^{A_1}\uh n[s]=\Psi_e^{A_2}\uh n[s]\}.
$$
If the limit of $\ell(e,s)$ is infinity, then we will try to fix the values
of $f=\Psi_e^{A_i}$ so that~$f$ is computable.  Given values for $f\uh n$, we
will attempt to force a disagreement between $\Psi_e^{A_1}(n)$ and
$\Psi_e^{A_2}(n)$ while following the rules of marker movement.  Any
disagreement we force could only be injured finitely often, and we will
eventually either preserve a disagreement or reach a believable computation
for~$f(n)$.

We perform the construction on a tree of strategies. Each height~$e$ will
correspond to the strategy~$\mathcal R_e$. Nodes of length~$e$ will be
extended by the three possible outcomes for the strategy~$\mathcal
R_e$:~$\infty$,~d (for disagreement), and~w (for waiting), ordered by
$\infty< $ d $ <$ w. The~$\infty$ outcome will correspond to the situation
where $\Psi_e^{A_1} =\Psi_e^{A_2}$. The d outcome will correspond to the
situation where we are preserving a disagreement between $\Psi_e^{A_1}$ and
$\Psi_e^{A_2}$. Otherwise, the outcome will be w; this includes the case in
which $\Psi_e$ is not a total truth-table reduction and our strategy is
eventually stuck waiting for convergence.

We first discuss the basic module for~$R_0$.  We will then modify this to the
$\alpha$-module by giving a formal construction in Section
\ref{minpairconst}.

\noindent
\subsection{\texorpdfstring{Basic module for $\mathcal R_0$}%
{Basic module for R0}}\label{basic}

The module works in order of~$k$ to give a definition of~$f(k)$.

For $k=0$:

\begin{itemize}
\item Wait till the first stage~$s$ when $\ell(0,s)>0$. Immediately
    enumerate $\gamma_i(q,s)$ into~$A_i$ for all $0<q<s$. This causes
    $\gamma_i(q,s+1)$ to be moved past the use of $\Psi_0(0)$, so that
    enumeration into~$A_i$ will not affect the computations.  We will say
    that the pair $\<0,k\> =\langle 0, 0\rangle$ has been {\em prepared}.
\item Wait until the next stage~$t$ where $\ell(0,t)>0$. At this stage
    there are two possibilities.
\begin{enumerate}
\item Putting $\gamma_1(0,t)$ into~$A_1$, or $\gamma_2(0,t)$
    into~$A_2$ or both, will cause a disagreement at argument~0.
\item Otherwise. Then define $f(0)=\Psi_0^{A_1}(0)[t].$

\end{enumerate}
\item Suppose we invoke~1. If we put both of the markers into their
    targets~$A_1$ and~$A_2$, then the strategy is successful by kicking
    because no markers will ever be defined below the use of $\Psi_0(0)$
    and thus our disagreement can never be injured. If we only changed
    one side, say~$A_1$,  then this will cause a disagreement that holds
    forever, unless at a later stage~$t'$,~$0$ enters~$D$. At such a
    stage~$t'$ we would enumerate $\gamma_i(0,t')$ into~$A_i$, and noting
    that $\gamma_2(0,t')=\gamma_2(0,t)$, this could potentially make the
    computation equal again. We would wait until the next stage $t''>t'$
    where $\ell(0,t'')>0$, and define $f(0)=\Psi_0^{A_1}(0)[t'']$, safe
    in the knowledge that this is now an immutable computation.
\end{itemize}

\noindent Given $f(k)$, we act for $k+1$:
\begin{itemize}
\item After defining $f(k)$, we wait for the stage~$s$ where
    $\ell(0,s)>k+1$. We then enumerate all $\gamma_i(q,s)$ into~$A_i$ for
    $q>k+1$. As before, this causes $\gamma_i(q,s+1)$ to move past
    $\psi_0(k+1)$, the use of $\Psi_0(k+1)$, and we call $\< 0, k+1\rangle$
    {\em prepared}.
\item Wait until the next stage~$t$ where $\ell(0,t)> k+1$. We examine
    the tt-reductions $\Psi_0(k+1)$ and allowable enumerations of
    $\gamma_i(n,t)$ into~$A_i$ for $n\le k+1$ below the use,
    $\psi_0(k+1)$, to see if we can cause a disagreement for argument
    $k+1$. Again by kicking, everything else is too big.  If we can cause
    a disagreement, we will do so with the least possible elements.  To
    be more specific, given $m\in\omega$, let~$m^-$ be the greatest
    $m'<m$ such that $\gamma_0(m',t)$ is defined and let~$m^+$ be the
    least $m''>m$ such that $\gamma_0(m'',t)$ is defined. By the rules of
    movement,~$m^-$ is the greatest number less than~$m$ such that
    $m^-\notin D_t$ and~$m^+$ is the least number greater than~$m$ such
    that $m^+\notin D_t$. Let~$m$ be the least element such that we can
    cause a disagreement by enumerating $\gamma_1(m,t)$ into~$A_1$ and
    either $\gamma_2(m^-,t)$, $\gamma_2(m,t)$, or $\gamma_2(m^+,t)$
    into~$A_2$, as well as all greater markers, according to the rules of
    movement.  We choose the least pairing that causes a disagreement and
    enumerate the appropriate elements.  Again, there are two
    possibilities:
\begin{enumerate}
\item We make such an enumeration to cause a disagreement.
When we implement the tree of strategies, nodes guessing that there
is a disagreement at~$\mathcal R_0$ will preserve the disagreement.
\item No such~$m$ exists to cause a disagreement. Then define
    $f(k+1)=\Psi_0^{A_1}(k+1)[t].$
\end{enumerate}
\end{itemize}

\noindent In case~1, the disagreement at $k+1$ may be injured. $\mathcal R_0$ will not
act when it sees a disagreement, so injury can only occur by elements
entering~$D$. If such elements do enter~$D$, causing an agreement between
$\Psi_0^{A_1}(k+1)$ and $\Psi_0^{A_2}(k+1)$, we will wait until we see
$\ell(0,s)>k+1$ and will try again to cause a disagreement. When we cannot,
we will define $f(k+1)=\Psi_0^{A_1}(k+1)[s]$. In fact, we will not be able to
find a new disagreement because we previously chose the minimal possible
disagreement.


\subsection{Tree of strategies}\label{treeofstrategies}

As mentioned previously, each node on our tree of strategies will be extended
by three possible outcomes:~$\infty<$ d $<$ w, where the ordering is left to
right.  We will build an approximation to the true path through the tree,
which we call~$\delta_s$.  We say~$s$ is an {\em $\alpha$-stage} if~$\alpha$
is a prefix of~$\delta_s$.  Nodes~$\alpha$ of length~$e$ can act for
$\mathcal R_e$ only at $\alpha$-stages.  During such action, any attempt at
defining the function given by $\Psi_e^{A_1}=\Psi_e^{A_2}$ will be
called~$f_\alpha$.  Whenever~$\delta_s$ moves to the left
of~$\alpha$,~$\alpha$ will be initialized, undefining all values
of~$f_\alpha$.  For~$\alpha$ on the true path, which is  $\liminf_s
\delta_s$, this will only happen finitely often.

We build an approximation~$\delta_s$ to the true path recursively as follows:
Given $\alpha=\delta_s\res (e+1)$, we define $\delta_s(e+1)$.
If $\ell(e,s)$ is greater than it has been at any previous $\alpha$-stage,
then $\delta_s(e+1)=\infty$.
If we have acted at some stage $t\leq s$ to cause a disagreement between
$\Psi_e^{A_1}(k)$ and $\Psi_e^{A_2}(k)$ and this disagreement has been
preserved, then $\delta_s(e+1)=$ d. Otherwise $\delta_s(e+1)=$ w, the waiting
outcome. We define~$\delta_s$ in this way until we have defined
$\delta_s(s-1)$, so that~$\delta_s$ has length~$s$.

We will not allow any disagreements to be injured by nodes extending or to
the right of the ``d'' outcome.  To achieve this, we will only allow each
node $\alpha=\delta_s\res e$ to enumerate elements $\gamma_i(n,s)$ for~$n$
greater than or equal to the last stage~$s_\alpha$ such that the
approximation to the true path was to the left of~$\alpha$.  If, for $e'<e$,
$\delta_s(e')=$ d, preserving a disagreement at~$k$, then the last stage~$t$
such that $\delta_t(e')=\infty$ must have been a stage where the truth-table
for $\Psi_{e'}(k)$ had already been defined, since we will not act to cause a
disagreement at~$k$ until we first see $\ell(e',s)>k$.  By convention, any
stage at which the truth-table for $\Psi_{e'}(k)$ has been defined must be
greater than the use $\psi_{e'}(k)$, so $s_\alpha>t>\psi_{e'}(k)$. Similarly,
if $\delta_s(e')=$ w, then the last stage such that the true path went
through d or~$\infty$ was larger than the use $\psi_{e'}(k)$, for~$k$ the
last spot where we caused a disagreement.


\subsection{Construction}\label{minpairconst}

{\em Stage $0$.}  Let $A_1[0]=\emptyset$ and $A_2[0]=\{0\}$.  Let
$\delta_0=\lambda$, the empty string.  Define $\gamma_i(0,0)=1$ for $i=1,$ 2.
Note that we have guaranteed that $A_1\neq A_2$.

{\em Stage $s+1$.}

Suppose~$n$ enters~$D$ at stage $s+1$.  Enumerate $\gamma_i(n,s)$ into
$A_i[s+1]$ for $i=1,$ 2.  We remove the marker $\gamma_i(n,s)$, so we will
not define $\gamma_i(n,s+1)$.  Initialize all~$\mathcal R_e$ for $e>n$,
undefining any values of~$f_\alpha$ for $|\alpha|=e$.

In increasing order of~$e$, for every $e\leq s$, do the following:

Let $\alpha=\delta_s\res e$, where outcomes are as described in
Section~\ref{treeofstrategies}.  Let $s_\alpha$ be the last stage~$t$ such
that~$\delta_t$ was to the left of~$\alpha$, or~0 if the approximation to the
true path has never been to the left of~$\alpha$.  Let~$k$ be the greatest
such that $f_{\alpha}(k)$ is defined, or ${-1}$ if there is no such~$k$.

{\em Step 1: Preparing $\langle e,k+1\rangle$.} If $\langle e,
k+1\rangle$ has never before been prepared, and $\ell(e,s)>k+1$ for
the first time since defining $f_\alpha(k)$, enumerate all
$\gamma_i(q,s)$ into~$A_i$ for all $q$ satisfying $q\geq\max\{e, k+2,
s_\alpha\}$.  We say the we have now {\em prepared} the pair $\langle
e, k+1\rangle$.  Move each $\gamma_i(q,s)$ (that is still defined), in
increasing order of~$q$, to the next fresh spot greater than $s+1$
according to the rules of motion.  This will prevent $\gamma_i(q,s)$
from influencing $\Psi_e^{A_i}(k+1)$ since it has been kicked past
$\psi_e(k+1)$. Initialize all~$\mathcal R_{e'}$, for $e'>q$, as in
Rule~7.  Note that we will call these newly kicked markers
$\gamma_i(q,s)$ until the end of the stage, where all markers will be
renamed to $\gamma_i(q,s+1)$.  If we prepared some pair in this step,
begin the steps for $e+1$.  Otherwise, go to Step~2.

{\em Step 2: Searching for a disagreement.}  If $\ell(e,s)>k+1$, we will
attempt to cause a disagreement at $k+1$.  Let~$m^-$ and~$m^+$ be as defined
in the basic module in Section \ref{basic}.  Let $m\geq \max\{e, s_\alpha\}$
be the least element such that we can cause a disagreement between
$\Psi_e^{A_1}(k+1)$ and $\Psi_e^{A_2}(k+1)$ by enumerating $\gamma_1(m, s)$
into~$A_1$ and either $\gamma_2(m^-, s),\ \gamma_2(m,s),$ or $\gamma_2(m^+,
s)$ into~$A_2$, as well as all larger markers.  (Of course, we do not consider
enumerating $\gamma_2(m^-, s)$ unless $m^-\geq \max\{e, s_\alpha\}$.)
We choose the least pairing
that causes a disagreement and enumerate the pair and all larger markers into
the corresponding~$A_i$.  We move all enumerated markers to the next fresh
spots greater than $s+1$.  If we were unable to cause a disagreement, we
define $f_\alpha(k+1)=\Psi_e^{A_1}(k+1)[s]$.

Add a new marker $\gamma_i(s+1, s+1)$ to the first fresh spot greater than
$s+1$.  Note that this $\gamma_i(s+1, s+1)$ will be greater than the current
(and former) locations of all other markers.
For any node~$\beta$ to the right of $\delta_s$, initialize~$\beta$ by
undefining all values of~$f_\beta$.

\subsection{Verification}\label{minpairverif}

Let the true path of the construction be $\liminf_s \delta_s$.

\begin{lem}\label{finitemovement}
Each marker moves finitely often.  That is, for $i=1,\ 2$ and $k\in\omega$,
there are finitely many stages~$s$ such that $\gamma_i(k,s)\neq
\gamma_i(k,s+1)$.
\end{lem}

\begin{proof}
Induct on~$k$.  Suppose the lemma is true for all $n\leq k$ and $i=1, \ 2$.
We will show it holds for $k+1$ as well.  If $k+1\in D$, then when $k+1$ is
enumerated into~$D$, $\gamma_i(k+1,s)$ is enumerated into~$A_i$ and the
marker is not redefined.  Thus $\gamma_i(k+1,s)$ moves finitely often.  So we
will assume that $k+1\notin D$.

According to Rule~5,~$\mathcal R_e$ can only move $\gamma_i(k+1,s)$ if $e\leq
k+1$.  Thus it is enough to show that none of these~$\mathcal R_e$ moves
$\gamma_i(k+1)$ infinitely often.  Suppose for a contradiction that~$\mathcal
R_e$ moves $\gamma_i(k+1)$ infinitely often and that~$e$ is the least such
that this happens for either~$i$.  There are two ways~$\mathcal R_e$ could
move $\gamma_i(k+1, s)$.  First, by preparing $\langle e, n\rangle$ for some
$n\leq k$ as in Step~1 of the Construction.  Since each pair $\langle
e,n\rangle$ can only be prepared once, this can only happen finitely often.


The other way that~$\mathcal R_e$ can move $\gamma_i(k+1,s)$ is by action of
Step~2 in the Construction, causing a disagreement.  By induction, there is a
stage~$t_1$ after which no markers $\gamma_i(n,s)$ ever move or are removed
for $n\leq k$.  By the minimality of~$e$, there is a stage~$t_2$ after which
no~$\mathcal R_{e'}$ moves $\gamma_i(k+1,s)$ for any $e'<e$. By the previous
paragraph, there is a stage~$t_3$ such that Step~1 of~$\mathcal R_e$ has
stopped moving $\gamma_i(k+1,s)$ by stage~$t_3$.  Let~$\alpha$ be the node of
length~$e$ on the true path.  Let~$t_4$ be a stage by which~$\delta_s$ never
goes to the left of~$\alpha$ after stage~$t_4$.  Finally, let $t>t_1, \ t_2,
\ t_3$, and~$t_4$.

Note that since~$\mathcal R_e$ acts infinitely often by moving
$\gamma_i(k+1,s)$, it must do so at infinitely many $\alpha$-stages,
for~$\alpha$ the length~$e$ node on the true path.  This is because
each~$\alpha'$ to the right of~$\alpha$ can only move elements greater than
the last stage at which they were initialized, and they will be initialized
infinitely often since they are not on the true path.

Now suppose that at some $\alpha$-stage $s_0\geq t$,~$\mathcal R_e$ acts by
enumerating $\gamma_i(k+1,s_0)$ to cause a disagreement between
$\Psi_e^{A_1}(n)$ and $\Psi_e^{A_2}(n)$ for some~$n$.  By assumption,
$\mathcal R_e$ will eventually act again at an $\alpha$-stage by enumerating
$\gamma_i(k+1,s)$ to cause a disagreement between $\Psi_e^{A_1}(n')$ and
$\Psi_e^{A_2}(n')$ for some $n'>n$.  This means that at some stage $s>s_0$,
$\ell(e,s)>n$, so the disagreement achieved at stage~$s_0$ will be injured.

We must examine how such an injury could happen.  Since the disagreement was
caused by enumerating $\gamma_i(k+1,s_0)$, we also must have enumerated
\mbox{$\gamma_j((k+1)^+,s_0)$,} for $j\neq i$, by Rule~6.  If we also
enumerated $\gamma_j(k+1,s_0)$ itself, then injury would be impossible since
the only markers still below the use of $\Psi_e(k+1)$ have stopped moving by
stage~$t_1$.  Thus, we must not have enumerated $\gamma_j(k+1,s_0)$ and
instead enumerated the marker succeeding it.  The only way the computation
can be injured is for $\gamma_j(k+1,s)$ to be enumerated. This cannot be
enumerated by any higher priority~$e'$ or any~$\alpha'$ to the left
of~$\alpha$, by the choice of~$t_2$ and~$t_4$.  It also cannot be enumerated
by any node to the right of~$\alpha$ because such a node will not be able to
move elements smaller than the last $\alpha$-stage, which must have been
bigger than the use of $\Psi_e(k+1)$.  Any node extending $\alpha\concat$d or
$\alpha\concat$w must also preserve the disagreement because~$\alpha$ must
have been extended by~$\infty$ at some stage after the truth-table for
$\Psi_e(k+1)$ was defined, and nodes cannot move elements smaller than the
last stage at which they were initialized.  In addition, we may ignore any
node extending $\alpha\concat\infty$ because we would not go to that outcome
unless the disagreement in question had already been injured.  Thus, there is
no way for the disagreement to be injured, and~$\mathcal R_e$ will never act
again at an $\alpha$-stage, contradicting our assumption that it would act
infinitely often.

Since~$\mathcal R_e$ cannot move $\gamma_i(k+1,s)$ infinitely often for
any~$e$, we can see that $\gamma_i(k+1,s)$ can only move finitely often.
Thus, by induction, each marker only moves finitely often.
\end{proof}

\begin{lem}
$D \leT A_1, A_2$.
\end{lem}

\begin{proof}
For $i=1$ or~$2$, to compute~$D(n)$, run the construction until the first
stage $s>n$ such that either $n\in D_s$ or $A_i[s] \res (\gamma_i(n,s)+1)
=A_{i} \res (\gamma_i(n,s)+1)$.  Such a stage exists because
$\gamma_i(n,s)$ can only move finitely often.  Now $n \in D$ if and only if
$n \in D_s$.  This is because, when~$n$ enters~$D$, $\gamma_i(n,s)$ is
enumerated into~$A_i$ before it is moved.
\end{proof}

\begin{lem}\label{Satisfied}
Requirement~$\mathcal R_e$ is satisfied for each $e\in \omega$.  That is, if
there is a total function~$f$ such that \mbox{$\Psi_e^{A_1}=\Psi_e^{A_2}=f$,}
then~$f$ is computable.
\end{lem}

\begin{proof}
Suppose $\Psi_e^{A_1}=\Psi_e^{A_2}=f$ total.  Then $\lim_s \ell(e,s)=\infty$.
Let~$\alpha$ be the node of length~$e$ on the true path.   We will show that
along the true path, for almost all~$k$, $f_\alpha(k)=\Psi_e^{A_i}(k)$.

Let~$s_\alpha$ be the greatest stage such that $\delta_{s_\alpha}$ is to the
left of~$\alpha$.  Then by the construction, after
stage~$s_\alpha$,~$\mathcal R_e$ will not be allowed to enumerate any
$\gamma_i(n,s)$ for $n< s_\alpha$.  Let $s'_\alpha>s_\alpha$ be a stage such
that $\gamma_i(n,s)$ has stopped moving by stage~$s'_\alpha$ for all $n<
\max\{e, s_\alpha\}$.  After this stage, the~$f_\alpha$ that we are building
will be the final~$f_\alpha$. Let~$k_0$ be the greatest~$k$ such that
$f_\alpha(k)$ was defined before stage~$s'_\alpha$ for the final~$f_\alpha$.
We will show that for $k>k_0$, $f_\alpha(k)=\Psi_e^{A_i}(k)$.

Suppose $f_\alpha(k)\neq \Psi_e^{A_i}(k)$ for some $k> k_0$.  Choose the
least such~$k$.  Suppose $f_\alpha(k)$ is defined at stage~$s$.  After
stage~$s$, some element enters~$A_1$ or~$A_2$ below the use $\psi_e(k)$.  At
some prior stage~$s'$, $\langle e,k\rangle$ was prepared as in Step~1 of the
Construction, kicking all $\gamma_i(n,s')$ for $n>k$ past~$s'$, which is
greater than $\psi_e(k)$.  Thus no $\gamma_i(n,t)$ for $n>k$ could enter
either~$A_i$ below $\psi_e(k)$ for $t\geq s$.  Therefore, any injury to the
current values of the $\Psi_e^{A_i}(k)[s]$ must be caused by some
$\gamma_i(n,t)$ entering~$A_i$ at stage $t\geq s$ for either~$i$, where~$n$
satisfies $\max\{e,s_\alpha\}\leq n\leq  k$ and $\gamma_i(n,t)<\psi_e(k,s)$.
Such $\gamma_i(n,t)$ are the only markers that both would be allowed to
enter~$A_i$ and would be able to cause injury.

\begin{clm}\label{cl:dis}
If we can cause a disagreement between $\Psi_e^{A_1}(k)$ and
$\Psi_e^{A_2}(k)$ at stage $t\geq s$, then we could have caused a
disagreement at stage~$s$ instead of defining~$f_\alpha(k)$.
\end{clm}

\begin{proof}
Suppose enumerating $\gamma_i(m,t)$ and $\gamma_j(m',t)$ as well as all
greater markers, causes a disagreement between $\Psi_e^{A_1}(k)$ and
$\Psi_e^{A_2}(k)$, where $i\neq j$ and $m'\leq m$.  Note that at least one of
$\gamma_i(m,t)$ and $\gamma_j(m',t)$ must be below the use of $\Psi_e(k)$,
hence less than~$s$.  Any marker that is at a position less than~$s$
at stage~$t$ will have been at the same position at stage~$s$, because no
markers are moved or added to numbers below~$s$ at or after stage~$s$.

\smallskip
{\em Case 1:} $m'=m$.  If both markers $\gamma_1(m,t)$ and $\gamma_2(m,t)$
are in the same spots as $\gamma_1(m,s)$ and $\gamma_2(m,s)$, then the same
enumeration could have been made to cause a disagreement instead of defining
$f_\alpha(k)$.  Suppose $\gamma_2(m,t)\neq \gamma_2(m,s)$.  Then since one
marker must be below the use, $\gamma_1(m,s)=\gamma_1(m,t)$.  Between
defining~$f_\alpha(k)$ and stage~$t$, $\gamma_2(m,s)$ moved, but since
$\gamma_1(m,s)$ didn't move, $\gamma_2(m^-,s)$ couldn't have moved, by
Rule~6.  Thus, enumerating $\gamma_1(m,s)$ and $\gamma_2(m,s)$ at stage~$s$
will give the same disagreement caused by enumerating $\gamma_i(m,t)$ for
both $i=1,2$, so we would have made this enumeration instead of defining
$f_\alpha(k)$ at stage~$s$.

\smallskip
{\em Case 2:} $m'=m^-$ at both stage~$t$ and stage~$s$.  As in Case~1, if
both markers are at the same numbers at stage~$t$ as they were at stage~$s$,
then the same enumeration could have been made instead of
defining~$f_\alpha(k)$. It is not possible that $\gamma_j(m^-,t)\neq
\gamma_j(m^-, s)$, because any movement would have forced $\gamma_i(m,s)$ to
move as well, by Rule~6, pushing it past the use.  Suppose $\gamma_i(m,s)\neq
\gamma_i(m,t)$ and $\gamma_j(m^-,s)=\gamma_j(m^-,t).$  Then the least element
that was enumerated into~$A_i$ and moved after stage~$s$ was either
$\gamma_i(m,s)$ or $\gamma_i(m^-,s)$.  Thus, at stage~$s$, we could enumerate
the appropriate one of $\gamma_i(m,s)$ or $\gamma_i(m^-,s)$ along with
$\gamma_j(m^-,s)$ to cause the same disagreement instead of defining
$f_\alpha(k)$.

\smallskip
{\em Case 3:} $m'=m^-$ at stage~$t$, but not at stage~$s$.  Then between
stage~$s$ and stage~$t$, some elements~$n$, $m'<n<m$ entered~$D$.  For the
least such~$n$, $m'=n^-$ at stage~$s$, so we could have enumerated
$\gamma_i(n,s)$ and $\gamma_j(m', s)$ to cause the same disagreement at
stage~$s$ instead of defining $f_\alpha(k)$.



Thus, any disagreement we could cause after defining $f_\alpha(k)$ could have
been caused {\em instead} of defining $f_\alpha(k)$.
\end{proof}

According to Claim~\ref{cl:dis}, in order to cause an injury to the agreement
between $f_\alpha(k)$, $\Psi_e^{A_1}(k)$ and $\Psi_e^{A_2}(k)$, the
enumeration must have caused a change in both $\Psi_e^{A_1}(k)$ and
$\Psi_e^{A_2}(k)$ to cause a new agreement between them that differs from
$f_\alpha(k)$.  Consider the greatest possible enumeration that would have
caused such a change. Suppose that the least elements of the greatest
enumeration are $\gamma_i(m,t)$ and $\gamma_j(m', t)$ for $j\neq i$ and
$m'\leq m$.  Then $\gamma_i(m,t)$ and $\gamma_j(m^+, t)$ would also be an
allowed enumeration. It could not be true that under such an enumeration,
$\Psi_e^{A_i}(k)=\Psi_e^{A_j}(k)$, as this would contradict that the pair
$\langle m,m'\rangle$ gave the greatest possible enumeration that changed
both computations to cause agreement again.  Thus, under this new
enumeration, a disagreement is caused between the two computations.  This is
impossible, since the existence of such a disagreement would have led to us
forcing the disagreement instead of defining $f_\alpha(k)$, as shown in
Claim~\ref{cl:dis}.  Thus, there can be no greatest enumeration to cause a
change in values of $\Psi_e^{A_i}(k)$, so the values will not change,
and~$f_\alpha$ was correct.  Since~$f_\alpha$ is a computable function, so is
$\Psi_e^{A_1}=\Psi_e^{A_2}$.
\end{proof}

This concludes the proof of Theorem~\ref{minpairthm}.
\end{proof}

Degtev~\cite{Degtev} and Marchenkov~\cite{Marchenkov} showed there is a \ce
tt-degree minimal among the tt-degrees; that is, there is a \ce set~$B$ such
that for all~$A$ such that $A<_{\rm tt} B$, $A$~is computable.  However, all
such tt-degrees are low$_2$, as shown by Downey and Shore~\cite{DS}. Thus,
there is no Turing complete \ce set of minimal tt-degree.

Our theorem cannot be extended to show the existence of a minimal pair of
Turing complete \ce sets within the bT-degrees (also known as wtt-degrees) by
the following

\begin{thm}[Ambos-Spies~\cite{Ambos-Spies}]
A \ce set is half of a minimal pair in the Turing degrees if and only if it
is half of a minimal pair in the bounded-Turing degrees.
\end{thm}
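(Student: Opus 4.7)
The proof splits into a trivial direction and a substantive one.

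\emph{Only-if (T-minimal pair $\Rightarrow$ bT-minimal pair).} This is immediate: if $(A,B)$ witnesses a Turing minimal pair of c.e.\ sets and $C \leq_{\mathrm{bT}} A$, $C \leq_{\mathrm{bT}} B$, then a fortiori $C \leq_T A$ and $C \leq_T B$, so $C$ must be computable. Hence the same pair $(A,B)$ witnesses bT-minimality.

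\emph{If (bT-minimal pair $\Rightarrow$ T-minimal pair).} The plan is to pass through a common degree-theoretic invariant. By the Ambos-Spies--Jockusch--Shore--Soare theorem, a c.e.\ Turing degree is cappable (i.e., half of a minimal pair in the c.e.\ Turing degrees) iff it contains no promptly simple set. The goal is to prove the exact analogue in the bT-setting: a c.e.\ set $A$ is half of a bT-minimal pair iff $\deg_T(A)$ contains no promptly simple set. Once both characterizations are in place, T-cappability and bT-cappability coincide because they share a right-hand side.

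The bT-characterization is established in two steps. (i) If $\deg_T(A)$ contains a promptly simple set $A'$, then for every c.e.\ noncomputable $B$ one builds a c.e.\ noncomputable $C$ with $C \leq_{\mathrm{bT}} A$ and $C \leq_{\mathrm{bT}} B$, via a prompt-permitting construction in which the use bounds are chosen computably; so $A$ is bT-noncappable. (ii) Conversely, if $\deg_T(A)$ contains no promptly simple set, a standard minimal-pair priority construction yields a c.e.\ noncomputable $B$ such that $(A,B)$ is a bT-minimal pair; the preservation strategies work in the bT-setting because bT-reductions have computable use bounds by definition, so the length-of-agreement analysis goes through unchanged.

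The main obstacle is step~(i): transferring the prompt-permitting argument, which most naturally produces a bT-reduction to $A'$, into a bT-reduction to $A$ itself. The resolution is to arrange $A' \equiv_{\mathrm{wtt}} A$, i.e.\ to work inside a single wtt-degree within $\deg_T(A)$, so that the composition of the prompt-permitting reduction with the wtt-equivalence still yields bT-reducibility to $A$. This is the delicate point that distinguishes the argument from the pure Turing case; the rest of the construction is a routine adaptation of the AJSS prompt-permitting machinery with bookkeeping of computable use bounds throughout.
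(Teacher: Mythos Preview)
The paper does not prove this theorem at all: it is quoted as an external result of Ambos-Spies~\cite{Ambos-Spies} and used only to observe that Theorem~\ref{minpairthm} cannot be pushed from tt to bT. So there is no ``paper's own proof'' to compare your proposal against.

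That said, your sketch has a genuine gap at exactly the point you flag as delicate. The trivial direction is fine. For the substantive direction you correctly identify the contrapositive target: if $\deg_T(A)$ contains a promptly simple set, show $A$ is bT-noncappable. Prompt permitting against a promptly simple $A'\equiv_T A$ indeed yields $C\leq_{\mathrm{bT}} A'$, but your ``resolution'' --- to \emph{arrange} $A'\equiv_{\mathrm{wtt}} A$ --- is asserted, not proved, and is precisely the content of the theorem in disguise. A priori, the wtt-degrees inside a single c.e.\ Turing degree can behave very differently, and there is no general reason why an arbitrary c.e.\ $A$ in a promptly simple Turing degree should be wtt-equivalent to a promptly simple set. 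You cannot simply ``work inside a single wtt-degree within $\deg_T(A)$'' without first establishing that the promptness needed for the permitting argument is available \emph{at the level of $A$ itself}, not merely somewhere in its Turing degree.

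What Ambos-Spies actually does (and what your outline is missing) is to show that the relevant promptness notion transfers across Turing equivalence in a way that still supports a wtt-bounded permitting construction below $A$ --- roughly, that non-cappability of $\deg_T(A)$ already forces a usable prompt-enumeration property of $A$ with computably bounded delay. This is a real argument, not a bookkeeping remark, and your proposal skips it. Until you supply that step (or an alternative direct construction of a bT-minimal-pair partner for $A$ from the hypothesis that $A$ is bT-cappable), the non-trivial implication is unproved.
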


Thus, no \ce Turing complete set is half of a minimal pair in the bT-degrees.
In contrast, our Theorem~\ref{minpairthm} shows that not only can a \ce
Turing complete set be half of a minimal pair in the tt-degrees, but the
other half of the minimal pair may also be a \ce Turing complete set.

\begin{qu}
Is there a truth-table minimal pair of bT-complete \ce sets?
\end{qu}

If this question has a negative solution, then Theorem~\ref{theorem1} would
follow, since sets of random strings are always bT-complete.

\begin{qu}
Which Turing degrees contain minimal pairs of (c.e.)\ tt-degrees?
\end{qu}

Jockusch~\cite{Jockusch} showed that the hyperimmune-free degrees coincide
with the Turing degrees that contain a single tt-degree; therefore, such
degrees cannot contain a minimal pair of tt-degrees. Jockusch also showed
that if a Turing degree contains more than one tt-degree, it contains an
infinite chain of tt-degrees. It is not known which of the hyperimmune
degrees, apart from~$\0'$, contain a minimal pair of tt-degrees. Not all do:
Kobzev~\cite{Kobzev} proved that there is a noncomputable c.e.\ set $A$ such
that if $B\equiv_T A$, then $A\leq_{\rm tt} B$.\footnote{In particular,
Kobzev showed this for any noncomputable, semicomputable, $\eta$-maximal set
$A$. We thank one of the anonymous referees for pointing us to this result.}
In other words, the tt-degree of $A$ is least among all the tt-degrees in the
Turing degree of $A$, so the Turing degree of $A$ does not contain a minimal
pair of tt-degrees. The $A$ that Kobzev constructed actually has minimal
tt-degree, hence must be low$_2$~\cite{DS}.

\section{\texorpdfstring{No noncomputable set is tt-reducible to every
$R_{K_U}$}{No noncomputable set is tt-reducible to every U-random set}}
\label{sec_X}

We have seen in Theorem~\ref{thmfinite} that given a finite collection of
sets of random strings $\{R_{K_{U_1}},\ldots R_{K_{U_n}}\}$, there is a
noncomputable \ce set tt-reducible to each $R_{K_{U_i}}$.  It is natural to
ask if there is in fact a noncomputable (and perhaps also c.e.)\ set
tt-reducible to every~$R_{K_U}$.  We show that there is no such set.

\begin{thm}\label{knight-bishop}
Given any noncomputable set~$X$, there is a universal prefix-free
machine~${U}$ such that~$X$ is not truth-table reducible to~$R_{K_{U}}$; that
is, there is no common noncomputable information tt-below every~$R_{K_{U}}$.
\end{thm}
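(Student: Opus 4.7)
The plan is to build $U$ via a KC-set construction that diagonalizes against every partial tt-reduction $\Psi_e$, adapting Muchnik's game-theoretic approach from the proof of Theorem~\ref{MuchnikThm}. Fix a background universal prefix-free machine $V$ and realize $U$ as the machine induced by a KC set $\mathcal{L}$ that simulates $V$ and receives supplementary requests from strategies $N_0, N_1, \ldots$, where $N_e$ is assigned weight budget $w_e = 2^{-e-c_0}$ with $c_0$ chosen so that $\Omega_V + \sum_e w_e \leq 1$, ensuring $U$ is a legitimate prefix-free machine. By the Recursion Theorem we may fix in advance a coding constant $d$ such that every request $\langle \ell, \sigma\rangle$ in $\mathcal{L}$ forces $K_U(\sigma) \leq \ell + d$, so our compressions actually move strings out of $R_{K_U}$.

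Strategy $N_e$ plays the following game on the truth table of $\Psi_e$. For each candidate witness $n$, the value $\Psi_e^{R_{K_U}}(n)$ is a computable Boolean function of the characteristic function of $R_{K_U}$ on $2^{\leq f_e(n)}$, where $f_e$ is the computable use. The only admissible moves are KC requests that flip bits of this characteristic function monotonically from $1$ to $0$ at cost at least $2^{-(|\sigma|-d)}$, subject to the total budget $w_e$. The strategy explores the computable tree of admissible move sequences, searching for an $n$ admitting two legal continuations that yield distinct values of $\Psi_e^{R_{K_U}}(n)$. If no such $n$ is ever found, then at every $n$ only a single value $v_n$ is reachable within budget, and $v_n$ is a computable function of $n$ by effective simulation of the budget-respecting tree; hence $\Psi_e^{R_{K_U}}$ is computable and so differs from the noncomputable $X$. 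Otherwise, at some $n$ two reachable values are available, and the strategy commits to one of them according to a canonical rule (say, the lex-first).

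The principal obstacle lies precisely in this latter case: since $X$ is not computably accessible during the construction, we cannot look up $X(n)$ to decide which of the two reachable values to commit to. Muchnik's resolution, which we adapt here, is to argue by contradiction that the canonical choice must disagree with $X(n)$ at some such witness: if the computable commitment process produced by $N_e$ agreed with $X$ on all of $N_e$'s committed witnesses, then, combined with the hypothesized tt-reduction $\Psi_e^{R_{K_U}}=X$ and an effective enumeration of $\overline{R_{K_U}}$, we could assemble a computable procedure for $X$, contradicting its noncomputability. Verification then consists of the routine checks: the total weight of $\mathcal{L}$ is at most $1$ so $U$ is prefix-free and universal via its simulation of $V$; each strategy acts within its weight budget $w_e$; and the final construction satisfies $\Psi_e^{R_{K_U}} \neq X$ for every tt-reduction $\Psi_e$.
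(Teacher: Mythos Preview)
Your proposal has a genuine gap in the game analysis and in the ``canonical commitment'' step.

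First, the notion of ``reachable within budget $w_e$'' ignores the opponent. The simulation of $V$ inside $U$ is itself continually compressing strings, so the opponent is making moves on the same board with total weight up to $\Omega_V$, which dwarfs any single $w_e$. Whether a value of $\Psi_e^{R_{K_U}}(n)$ can be forced is not determined by your budget alone; it depends on a two-player game in which the opponent may have much more measure than you. Your search over ``legal continuations'' either ignores opponent moves (in which case the conclusion that only one value is reachable tells you nothing about the final $R_{K_U}$) or allows them (in which case the tree is not bounded by $w_e$ and the analysis is not computable as stated).

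Second, the commitment-by-contradiction step does not work. Suppose you define $g(n)$ to be the unique reachable value when there is one, and the lex-first otherwise. Granting that $g$ is computable, you are right that $g\neq X$ somewhere; but the construction has no access to $X$ and so cannot locate a witness $n$ at which to act. If instead the strategy commits at the \emph{first} $n$ with two reachable values, nothing prevents $g(n)=X(n)$ there, and you have spent budget for nothing; repeating this search over successive $n$'s exhausts $w_e$. Your suggested contradiction (``agreement on all committed witnesses lets us compute $X$'') fails because the committed witnesses are a sparse set and say nothing about $X$ elsewhere. Muchnik's original argument diagonalizes against the \emph{c.e.}\ set $\emptyset'$, where one can wait for enumeration; that trick does not transfer to an arbitrary noncomputable $X$.

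The paper's proof handles exactly these two obstacles. It first observes that one may assume $X$ is $\Delta^0_2$ (else the theorem is trivial) and uses a computable approximation $\{X_s\}$, so the construction \emph{can} consult the current guess $X_s(n)$. It then builds \emph{three} machines $U_0,U_1,U_2$ and plays symmetric games $G(\epsilon,\epsilon)$ at a geometrically decreasing stack of budgets $\epsilon_0,\epsilon_0/2,\ldots$; symmetry guarantees a winning value at each level, and the stack absorbs the opponent's unknown-but-bounded total weight. Three machines are essential: for the multi-bit game one needs the family of winning sets to be closed under intersection (hence a principal ultrafilter, yielding a unique forced string $\sigma$), and closure under intersection is obtained by playing $S$ on $R_0$, $T$ on $R_1$, and $\overline{S\cap T}$ on $R_2$ to force a disagreement whenever it fails. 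The ``knight and bishop'' device uses the untouched third machine to restore agreement after the opponent cheats at an intermediate level. None of these mechanisms appear in your outline, and the single-machine, single-budget approach you sketch cannot substitute for them.
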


Note that this theorem is in contrast to the non-prefix-free case, since
every~$R_{C_U}$ is tt-complete.

\begin{proof}
We begin by giving a sketch of the construction. We will construct three
different prefix-free universal machines ${U}_0,{U}_1,{U}_2$, and guarantee
that they cannot all tt-compute~$X$. For convenience of notation, we denote
the corresponding $R_{K_{U}}$'s by~$R_0$,~$R_1$, and~$R_2$. At the moment, we
do not know whether this non-uniformity is necessary in the proof.

Since every~$R_{K_{U}}$ is~$\Delta^0_2$, we need only consider
$\Delta^0_2$-sets~$X$.  Let $\{\Psi_i\}_{i\in \omega}$ be a listing of
partial tt-reductions.  We will meet the following requirements for all~$i$:
\[
\mathcal{R}_i: \neg(\Psi_i^{R_0}=\Psi_i^{R_1}=\Psi_i^{R_2}=X)
\]
By Posner's trick, this is enough to show that~$X$ is not tt-reducible to all
three sets, as if it were, we could build a single tt-reduction~$\Psi_i$ such
that $\Psi_i^{R_j}=X$ for each $j \le 2$.  To satisfy
requirement~$\mathcal{R}_i$, either~$\Psi_i$ will not be a total
tt-reduction, or we will force one of the following to hold:

\begin{enumerate}[label=(\roman*)]
\item $\Psi_i^{R_j}(x)\neq \Psi_i^{R_k}(x)$ for some $x\in \omega$ and
some $j,k \le 2$, or
\item $\Psi_i^{R_j}\neq X$ for some $j \le 2$.
\end{enumerate}
The way we will achieve this is to build the machines in such a way that if
condition~(i) fails, then the set $\Psi_i^{R_j}$ must be computable, so it
cannot be~$X$.

In order to make these machines universal, we fix a universal prefix-free
machine~${V}$ and simply require that ${U}_j(000*\sigma)={V}(\sigma)$ for
each $j \le 2$. We consider this coding requirement as our \emph{opponent}
controlling~$1/8$ of the total measure, and the diagonalization requirement
as ``we'', the other player controlling the remaining~$7/8$ of the game board
(machines we build). Here the number of~$0$'s is picked so that we have
some amount of space bigger than our opponent as needed in the verification
process.

\subsection{A single requirement \texorpdfstring{$\mathcal{R}_0$}%
{R0}}

We first consider how to satisfy only one requirement~$\mathcal{R}_0$.


\subsubsection{One-bit game}\label{onebit}
We begin by considering only one bit,~0; that is, we are looking only at the
first bit of the first tt-reduction.  We wait until the truth table of
$\Psi_0(0)$ is defined.  Since~$\Psi_0$ may not be a total tt-reduction, this
may never occur.   Before this happens, we do nothing for this
requirement~$\mathcal{R}_0$. Once we have the truth table for~$\Psi_0(0)$, we
can attempt to satisfy this requirement.

We modify the games used in Muchnik's proof that there is a universal
prefix-free machine~${U}$ such that~$R_{K_{U}}$ is not tt-complete.  For the
moment, we define the game ``$G(\epsilon,\delta)$ on~$R_j$'' as follows. We
imagine that the game board is the truth table of $\Psi_0(0)$ and that our
starting position on the game board is the current state of~$R_j$.  The game
$G(\epsilon,\delta)$ is the game where the opponent (the coding requirement)
has~$\epsilon$ measure to use, and we have~$\delta$ measure to use to
enumerate strings (to change~$R_j$).  We are building KC sets as defined in
Theorem~\ref{theorem1} to construct the~$ U_j$'s so that they will indeed be
prefix-free machines, so we must keep the weight of the sets below~1.
Since~$R_j$ is the set of strings that are random with respect to~$U_j$, we
change~$R_j$ by compressing strings.  Each move consists of a player (the
opponent or us) compressing any number of strings, which may change bits
of~$R_j$ from~1's to~0's.

When $\epsilon=\delta$, i.e., when the game is symmetric, we always have a
winning strategy for forcing $\Psi_0^{R_j}(0)$ to be either~$0$ or~$1$ for
each~$R_j$.  We call the value being forced the \emph{value of game
$G(\epsilon,\epsilon)$ on~$R_j$}.  Note that we can computably determine the
value of the game since the game is finite and has only finitely many
sequences of play.

For now fix a small~$\epsilon_0$ (we call this~$\epsilon_0$ the
\emph{starting measure} of the requirement~$\mathcal{R}_0$). If for some $j,k
\le 2$, the values of the games $G(\epsilon_0,\epsilon_0)$ played on~$R_j$
and~$R_k$ are different, i.e., we have strategies that can force
$\Psi_0^{R_j}(0)\neq\Psi_0^{R_k}(0)$, then for the least such pair $j,k$, we
use the strategies for both and play the games with the opponent.

There are two possible outcomes of this dual game. First, if the opponent
never uses more than~$\epsilon_0$ measure (i.e., he does not cheat in the
game), then we satisfy the requirement~$\mathcal{R}_0$ in finitely many
stages by forcing a disagreement between $\Psi_0^{R_j}$ and $\Psi_0^{R_k}$.
If the opponent uses more than~$\epsilon_0$ measure in the play, then we
simply reset the game.  Note that in this situation the opponent uses more
measure than we do.  In the end, he can only cheat by using over~$\epsilon_0$
measure finitely often, since his total measure is bounded by~$1/8$.

In the case where we cannot find such a pair $j,k$, we know that for the
games $G(\epsilon_0,\epsilon_0)$ played on each set $R_0,R_1,R_2$, the values
have to be the same. Now reduce the measure and consider the games
$G(\epsilon_0/2,\epsilon_0/2)$ on each set and compare the values to the
values given in the original games.

We first deal with the scenario when there exist $j,k$ such that
$G(\epsilon_0,\epsilon_0)$ on~$R_j$ and $G(\epsilon_0/2,\epsilon_0/2)$
on~$R_k$ have different values. In this case, we play both games at the same
time, forcing the values to be different. If the opponent does not cheat,
then we have a permanent win. If the opponent cheats, we will do a modified
game analysis (see \S~\ref{bishop}) to resettle agreement on the games.

The remaining case is that these two levels of games,~$\epsilon_0$ and
~$\epsilon_0/2$, on all three sets, all have the same value. In this case, we
continue to look at the next level~$\epsilon_0/4$, then~$\epsilon_0/8$, and
so on.  We call this sequence of games the \emph{stack of games} for the
first bit.  If we find a game, say
$G(\epsilon_0/{2^{n+1}},\epsilon_0/{2^{n+1}})$, on~$R_j$ that has a different
value from all previous games, then we play that game simultaneously with the
game $G(\epsilon_0/{2^n},\epsilon_0/{2^n})$ on~$R_k$ for the least $k\neq j$.
It will be important to always choose the second game from the previous level
and not from another earlier level, so that if the opponent cheats in the
game $G(\epsilon_0/{2^{n+1}},\epsilon_0/{2^{n+1}})$, we will know that~$R_k$
has only used at most twice the measure that the opponent used.

The tt-reduction has been fixed, so eventually we reach a small enough
measure so that the game is actually the $0$-game, i.e., no one can enumerate
anything to change the tt-reduction, or the ``game board''. In this case, the
game is already determined by the current value of the tt-reduction, and in
such a case, we check if the current $\Psi_0^{R_j}(0)=X_s(0)$.  Note that
since we were unable to force a disagreement, this value will be the same for
each $j \le 2$.  If $\Psi_0^{R_j}(0)\neq X_s(0)$, then we stop considering
this requirement~$\mathcal{R}_0$ since the requirement seems to be satisfied.
If~$X(0)$ changes value later we will continue the construction. If the two
values agree, then we move on to consider $\Psi_0\res 2$, i.e., the first two
bits of~$\Psi_0$ (see \S~\ref{multiple bits}).  We will show that this
process cannot continue forever, else~$X$ would be computable, so we will
eventually satisfy requirement $\mathcal{R}_0$.

\subsubsection{``Knight and Bishop'' strategy}\label{bishop}

Now we discuss how to handle the scenario when the opponent cheats in an
intermediate level game, e.g., $G(\delta,\delta)$ on~$R_0$ and
$G(\delta/2,\delta/2)$ on~$R_1$ (other cases are analogous). Note that
whenever his cheat amount is greater than~$\epsilon_0$, we can always reset
the whole stack of games, as we know we will only have to do this finitely
often.

Now if the measure used by the opponent does not exceed~$\epsilon_0$ but
exceeds the amount he is allowed to use in either game, i.e., he cheats, then
we reset the games on~$R_0$ and~$R_1$, and consider brand-new games
$G(\epsilon_0,\epsilon_0)$ on these two sets (with the current game boards).
On~$R_2$, note that the opponent has used the same amount of measure, as his
actions on these three game boards are identical, but we haven't done
anything.  Consider the game $G(\epsilon_0-\lambda,\epsilon_0)$ on~$R_2$,
where~$\lambda$ is the amount the opponent has already used.

This modified game is not symmetric, but it is easy to see that we can force
the same value here as the value we could force for
$G(\epsilon_0,\epsilon_0)$ on~$R_2$ when we started playing
$G(\delta,\delta)$ on~$R_0$ and $G(\delta/2,\delta/2)$ on~$R_1$. The reason
is that we have not yet made any move on~$R_2$ since, and so we may regard
all of the opponent's actions since as the first move of his play, and we can
simply use the same winning strategy to force the same value. Note that our
winning strategy did not depend on the turn order of the game, as each player
is only capable of changing~1's to~0's in~$R_2$, so turn order is not
important and we may allow that the opponent plays first. In the construction
in \S\ref{const}, the opponent is always given the opportunity to play first.

Now if the new games $G(\epsilon_0,\epsilon_0)$ on~$R_0$ or~$R_1$ have
different values from the modified game on~$R_2$, then we can play the new
game on~$R_0$ (or~$R_1$) and the modified game on~$R_2$ to force a
difference. If the opponent again cheats, then together with the amount he
already used before, he must have exceeded his allowed measure~$\epsilon_0$,
and so we can reset the whole stack of games.

If these games all have the same value, we have reset the agreement on~$R_0$
and~$R_1$ for the first bit, and the new value being forced is the same as
the old value (before cheating). Now we consider the game
$G(\epsilon_0,\epsilon_0)$ on~$R_2$, which could have a new value as the game
board has changed since we previously considered this game. This goes back to
the original set-up of symmetric games at the~$\epsilon_0$ level, and so we
can continue the construction.  Note that the opponent can cheat only
finitely often because the stack of games is finite, so there is some minimal
measure $\epsilon_0/{2^n}$ that the opponent must have used in order to
cheat, and the opponent's total measure is bounded by~$1/8$.

In the above discussion, we can think of~$R_0$ and~$R_1$ as knights who have
gone off to fight a battle.  Their opponent has cheated and they return home.
The bishop,~$R_2$, is waiting for them and restores their faith when they
return.  If the three new games $G(\epsilon_0,\epsilon_0)$ all force the same
value, it will be the same value as before.  We will use this in the
verification to show that if there is no disagreement between the three
tt-reductions, then the set they are computing must be computable and so it
cannot be~$X$.  The idea is that if by stage~$s_1$, the opponent has stopped
exceeding the~$\epsilon_0$ limit, then any time after stage~$s_1$ that the
values of the games $G(\epsilon_0, \epsilon_0)$ agree, this value will always
be the same.  To see this more clearly, we first must discuss the
multiple-bit game.

\subsubsection{Multiple bits}\label{multiple bits}

For the one-bit game as above, once we have a stack of games
from~$\epsilon_0$ to~$0$ (remember that a sufficiently small game is already
the $0$-game, where neither player can change the game board), then we check
whether the value $\Psi_0^{R_j}(0)$ we have forced agrees with the current
$X_s(0)$. If not, then we stop considering the requirement~$\mathcal{R}_0$;
if so, we continue to look at the $2$-bit game and similarly build such a
stack of games. So now, by induction, let us consider an $n$-bit game, i.e.,
we consider the first~$n$ many bits of~$\Psi_0$.
An $n$-bit game $G(\epsilon,\delta)$ on~$R_j$ is defined similarly to the
one-bit game.  The game board is now the set of all truth tables for the
first~$n$ bits of~$\Psi_0$, which may be thought of as one large truth table.
The starting position is again the current state of~$R_j$.

Again we wait until the truth tables for each of the $n$ bits have been
determined. Now the situation is slightly more complicated. Consider the
first game board~$R_0$. Given a set $S\subseteq 2^n$, when we play the game
$G(\epsilon_0,\epsilon_0)$ on~$R_0$, by symmetry we have a winning strategy
for either~$S$ or its complement~$\overline{S}$; that is, we can force the
sequence of values of the first~$n$ bits of $\Psi_0^{R_0}$ to be in
either~$S$ or its complement. If we have a winning strategy for~$S$, then we
call~$S$ a {\em winning set}. The collection of all such winning sets gives
us a collection of subsets of~$2^n$. If two games~$R_j$ and~$R_k$ do not have
the same collection of winning sets, then there is an~$S$ such that we have
winning strategies for~$S$ on~$R_j$ and~$\overline{S}$ on~$R_k$, both for the
game $G(\epsilon_0,\epsilon_0)$. Then we can simply start using the
strategies to play the game with the opponent, and the requirement is
satisfied unless the opponent cheats by using more than~$\epsilon_0$ amount
of measure, in which case we reset the whole game board. Thus, if we cannot
start playing a game to cause a disagreement between $\Psi_0^{R_j}$ and
$\Psi_0^{R_k}$, then we may assume that all three sets~$R_0$,~$R_1$,
and~$R_2$ have the same collection of winning sets.

We may also assume that the collection of winning sets forms an ultrafilter.
We have already mentioned that for any $S\subset 2^n$, either~$S$ or its
complement is a winning set.  It is also easy to see that if $S\subset T$
and~$S$ is a winning set, then so is~$T$.  It is left to show that the
intersection of two winning sets~$S$ and~$T$ is also a winning set.  For a
contradiction let us assume that~$S$ and~$T$ are winning sets while $S\cap T$
is not. Then we know that $\overline{S\cap T}$ is a winning set.  But then we
can simultaneously play three games using the winning strategies of~$S$
on~$R_0$,~$T$ on~$R_1$ and $\overline{S\cap T}$ on~$R_2$, causing a
disagreement between $\Psi_0^{R_j}$ and $\Psi_0^{R_k}$ for some~$j$ and~$k$
as long as the opponent does not cheat, in which case we reset the whole game
board.  Thus, if we cannot start playing a game to force a difference in this
way, then the collection of winning sets must be closed under intersection.

In a finite Boolean algebra such as the collection of subsets of~$2^n$, every
ultrafilter is principal, so the collection of winning sets is generated by a
single $\sigma\in 2^n$. Thus, this singleton set $\{\sigma\}$ is itself a
winning set for the games $G(\epsilon_0,\epsilon_0)$ on these three sets.

Note that this $\sigma\in 2^n$ is compatible with the $\tau\in 2^{n-1}$ we
found at the last step when we considered $(n-1)$-bit games.  This is because
the set of both extensions of~$\tau$ of length~$n$ forms a winning set,
so~$\sigma$ must be an extension of~$\tau$.

Now the construction proceeds in a similar way as in the one-bit game. We
consider the next level $G(\epsilon_0/2,\epsilon_0/2)$. If any of the three
games has the complement of $\{\sigma\}$ as a winning set, then we can play
the corresponding games to force a difference; for example,
$G(\epsilon_0,\epsilon_0)$ on~$R_0$ using the strategy for $\{\sigma\}$ and
$G(\epsilon_0/2,\epsilon_0/2)$ on~$R_1$ using the strategy for
$\overline{\{\sigma\}}$. If the opponent cheats, then we will handle it in
the same way as in \S~\ref{bishop}, using the third set to resettle
agreement. We will see in the verification that if we have reached a
stage~$s_1$ by which the opponent has stopped cheating by
exceeding~$\epsilon_0$, then if there is an agreement between all three
tt-reductions, the first~$n$ bits of the set they compute can be determined
by stage~$s_1$. Since this does not depend on~$n$, they would compute~$X$,
which we assumed to be noncomputable.

This finishes the induction step and the analysis of a single
requirement~$\mathcal{R}_0$.

\subsection{Multiple requirements}

To handle multiple~$\mathcal{R}_i$-requirements, we follow one simple rule:
Whenever a higher-priority game acts, then we reset all lower-priority games
and reset their starting measure~$\epsilon_i$ to be a new small number so
that any game playing with that measure will not change the game board for
higher-priority games.  The possible actions of the higher-priority game that
will lead to resetting the lower-priority games include convergence of a
tt-reduction so that a new relevant truth table is defined,
examining new games, and making a move in a game (as defined formally in
Remark~\ref{rem-acted}).

\subsection{Construction}\label{const}
Let $\{X_s\}_{s\in \omega}$ be a computable approximation of the
$\Delta^0_2$-set~$X$.

We construct~$U_j$ by building KC sets~$A_j$ for each $j \le 2$.  Given a
universal prefix-free machine~$V$ and its corresponding \ce KC set~$A_V$, the
opponent enumerates $\langle d+3,\tau\rangle$ into each~$A_j$ whenever
$\langle d, \tau \rangle$ is enumerated into~$A_V$.  In our construction,
whenever we want to enumerate additional elements into~$A_j$ for the purpose
of our games, it will be to make a particular string~$\tau$ be nonrandom with
respect to~$U_j$; that is, the only move we can make is to enumerate some
$\langle d,\tau\rangle$ into~$A_j$ with $d<|\tau|$ so that $R_j(\tau)$
changes from a~1 to a~0.  Therefore, whenever we (and not the opponent)
enumerate elements into~$A_j$, the element may be assumed to be of the form
$\langle |\tau|-1, \tau\rangle$; so when we say that we are enumerating a
string~$\tau$ into~$A_j$, we are actually enumerating $\langle |\tau|-1,
\tau\rangle$.

Begin with $A_{j,0}=\es$ for all $j \le 2$, and with all~$\epsilon_i$
undefined.

Stage $s+1$, $s=\langle i, e\rangle$.  We will act for
requirement~$\mathcal{R}_i$ if able.  We call all stages of the form $\langle
i,e\rangle+1$ {\em i-stages}.

First we allow the opponent to make any enumerations into $A_{j,s+1}$ for
$j \le 2$ as elements enter $A_{V,s+1}$.

\medskip
{\emph Case $1$.} Either $e=0$ or for some $m<i$, $\mathcal{R}_m$ has acted
since the last $i$-stage.

We must define a new~$\epsilon_i$.  Reset any previous value of~$\epsilon_i$
and define the new value of~$\epsilon_i$ to be the greatest number of the
form~$2^{-c}$ where $c\geq i+4$ such that $2^{-c}<\epsilon_m/2$ for all
previously defined values of~$\epsilon_m$ for any $m\in \omega$, and such
that no element in any currently defined truth table for~$\Psi_m$ (for $m<i$)
corresponds to a string of length greater than~$c$.  This will ensure that we
do not add too much measure to~$A_j$ and that games for lower-priority
requirements do not alter the game boards of games for higher-priority
requirements.  After defining~$\epsilon_i$, go to the next stage.

\medskip
{\emph Case $2$.} $\mathcal{R}_m$ has not acted for any $m<i$ since the last
$i$-stage,~$\epsilon_i$ is currently defined, and we are not currently
playing any games for~$\mathcal{R}_i$.

Check if the truth table of $\Psi_i(0)$ has converged after~$s$ steps.  If
not, go to the next stage.  If so, we examine the one-bit games.  Make a
stack of games as described in \S\ref{onebit} and check if there are $j,k \le
2$ and corresponding games in the stack so that we can force a disagreement.
If so, we begin to play the appropriate game.  We give the opponent the
opportunity to move first, which is to say that we will not make a move at
this stage.

If all games throughout the one-bit stack agree, then we ask if they agree
with $X_s(0).$  If not, then go to the next stage.  If so, then we move on to
two bits, and so on.  When we get to~$n$ bits, we check if the truth tables
for $\Psi_i\res n$ have been defined after~$s$ steps.  If not, go to the next
stage.  If so, we determine the winning sets for the games $G(\epsilon_i,
\epsilon_i)$ for $R_j$ for each $j \le 2$.

If the collections of winning sets differ on~$R_j$ and~$R_k$ for some
\mbox{$j,k \le 2$}, then we choose an $S\subset 2^n$ such that~$S$ is a
winning set for~$R_j$ and~$\overline{S}$ is a winning set for~$R_k$ and we
begin to play the games using these strategies. As before, we go to the next
stage, allowing the opponent to move first.

If the collections of winning sets are the same for all the~$R_j$'s, we then
check if there are any winning sets~$S$ and~$T$ such that their intersection
is not a winning set.  If so, we begin to play three games, corresponding to
the strategies for~$S$,~$T$, and $\overline{S\cap T}$.  We go to the next
stage, as usual.

In the remaining situation, the collection of winning sets forms an
ultrafilter generated by some~$\sigma$.
%
%
We can examine the stack of games to see if any of the $G(\delta,\delta)$
games have~$\overline{\{\sigma\}}$ as a winning set.  If so, we begin to play
the appropriate games to force a disagreement and move to the next stage.
Otherwise, all games in the stack have~$\{\sigma\}$ as a winning set, so we
ask if $X_s\res n=\sigma$.  If not, the requirement is temporarily satisfied
and we go to the next stage.  If $X_s\res n=\sigma$, we must consider the
$(n+1)$-bit situation.

(Note that Case~2 also encompasses the situation where we are simply waiting
for either a truth table to be defined or for~$X_s$ to change so that it
agrees with the current tt-reduction.  Thus, the steps of checking the stacks
and finding~$\sigma$, for example, may be repeated unnecessarily in this
construction.)

\medskip
{\em Case $3$.} $\mathcal{R}_m$ has not acted for any $m<i$ since stage the
last $i$-stage,~$\epsilon_i$ is currently defined, and we have already begun
playing games.
Check if the opponent has cheated by enumerating more than his allowed value
in a game.

\smallskip
{\em Case $3$a.} The opponent has not cheated. For each $j \le 2$ such that we
are playing a game on~$R_j$, we follow our designated strategy, which entails
enumerating some set of strings into $A_{j,s+1}$.  We then go to the next
stage, allowing the opponent to play.


\smallskip
{\em Case $3$b.} The opponent has cheated by exceeding $\delta<\epsilon_i$ in
the game $n$-bit game $G(\delta,\delta)$ for $n\geq 1$.  If this happens,
then we were playing two games, on, for example,~$R_0$ and~$R_1$.  We apply
the knight and bishop strategy of \S\ref{bishop}.  We ask if either new
$n$-bit game $G(\epsilon_i,\epsilon_i)$ on~$R_0$ and~$R_1$ has a winning
strategy that could cause a disagreement with the game
$G(\epsilon_i-\lambda,\epsilon_i)$ on~$R_2$, where~$\lambda$ is the amount
used by the opponent since we started the game in which he cheated.  If so,
then we begin to play the appropriate games and move to the next stage.  If
not, then we simply move to the next stage.  (Note that in this situation, if
Case~1 does not apply, then Case~2 will apply and we will once again be
considering the games $G(\epsilon_i,\epsilon_i)$.  On~$R_0$ and~$R_1$, these
games will have the same winning sets through~$n$ bits as they previously
had, because the ``bishop''~$R_2$ has brought them back to their old values.)

\smallskip
{\em Case $3$c.} The opponent has cheated by exceeding~$\epsilon_i$ or by
exceeding $\epsilon_i-\lambda$ in an unbalanced game, as described in Case
3b. Stop playing the games and proceed exactly as in Case~2. The game boards
will have changed since the last time we performed the steps of Case~2.

\begin{rem}\label{rem-acted}
We say the requirement~$\mathcal{R}_i$ has ``acted'' at stage~$s$, thus
causing Case~1 to apply at the next $m$-stages for $m>i$, if any of the
following occur:
\begin{enumerate}[label=(\roman*)]
\item Case 1 applies,
\item We examine games corresponding to a previously unexamined
    truth table (either by a truth table becoming defined or by moving to
    an additional bit),
\item We begin a new game, or
\item Case 3 applies and either we or the opponent makes a nonempty
    move in a game.
\end{enumerate}
\end{rem}

\subsection{Verification}

\begin{lem}
Every requirement~$\mathcal{R}_i$ eventually stops acting and is satisfied.
\end{lem}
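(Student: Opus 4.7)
I would prove the lemma by induction on $i$. Fix $i$ and, by the inductive hypothesis, let $s_0$ be a stage after which no $\mathcal{R}_m$ with $m<i$ ever acts; then Case~1 never applies to $\mathcal{R}_i$ after $s_0$, so $\epsilon_i$ is defined once and for all. The construction ensures that requirements of lower priority than $\mathcal{R}_i$ pick their starting measures small enough that none of their moves alters a game board used by $\mathcal{R}_i$, so they can be ignored here.

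The driving observation is a measure count on the opponent. His total contribution to each $A_j$ is bounded by $1/8$, and a cheat at level $\ge\delta$ commits at least $\delta$ of his unused measure, so globally there can be at most $\lfloor 1/(8\delta)\rfloor$ cheats at level $\ge\delta$. Hence there is a stage $s_1\ge s_0$ after which the opponent never again cheats at level $\epsilon_i$; in particular the $\mathcal{R}_i$-stack is never reset wholesale after $s_1$. Moreover, for every $n$, the $n$-bit stack has finite depth (bounded by the longest string length appearing in the truth tables for $\Psi_i\res n$), and within it only finitely many cheats at each level can still occur.

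Fix any $n$ for which every truth table of $\Psi_i\res n$ has converged. After $s_1$ and finitely many further stages, the $n$-bit stack settles to a stable configuration, and two cases arise. In case~(a), at some level of this stable stack the collections of winning sets on $R_0,R_1,R_2$ do not form a single common principal ultrafilter---either two of them disagree outright, or some intersection of winning sets fails to be winning. Then we start games that force $\Psi_i^{R_j}(x)\neq\Psi_i^{R_k}(x)$ for some $j,k\le 2$ and $x<n$; the knight-and-bishop maneuver (\S\ref{bishop}) absorbs any subsequent intermediate-level cheat without disturbing the disagreement, since the ``bishop'' $R_2$ was never touched by the original strategy. Thus~(i) is achieved and $\mathcal{R}_i$ stops acting. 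In case~(b), the stable configuration yields a common winning singleton $\{\sigma_n\}$, and by the compatibility observation in \S\ref{multiple bits}, $\sigma_n$ extends $\sigma_{n-1}$; as noted at the end of \S\ref{bishop}, knight-and-bishop restorations inside scenario~(b) leave $\sigma_n$ unchanged.

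The remaining case to rule out is that $\mathcal{R}_i$ acts infinitely often while never securing~(i). After all the sources of action listed in Remark~\ref{rem-acted} other than climbing to a higher bit have become finite, climbing from bit $n$ to bit $n+1$ is the only way to act; each such climb requires $X_s\res n=\sigma_n$ at the climbing stage. Since $X_s\res n$ stabilizes to $X\res n$ (as $X$ is $\Delta^0_2$) and $\sigma_n$ itself is stable, this forces $X\res n=\sigma_n$ for every $n$. Careful bookkeeping then exhibits $n\mapsto \sigma_n$ as a computable function, so $X$ would be computable, contradicting the hypothesis that $X$ is noncomputable. The main obstacle, and the point I would check most carefully, is this last computability extraction: one must verify that the knight-and-bishop restorations truly preserve both the disagreement target in~(a) and the winning singleton in~(b) through every subsequent intermediate cheat, so that the final $\sigma_n$ can be read off from a computable finite segment of the construction rather than depending on an unknown stage of stabilization.
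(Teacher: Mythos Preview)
Your proposal is correct and follows essentially the same inductive approach as the paper: stabilize $\epsilon_i$ after higher-priority requirements settle, use the opponent's $1/8$ bound to find $s_1$ after which no $\epsilon_i$-level cheat occurs, and derive a contradiction by showing that the stable singletons $\sigma_n$ (preserved through intermediate-level cheats by the knight-and-bishop mechanism) would compute $X$. One small correction worth noting: in your case~(a) the knight-and-bishop maneuver does not literally ``preserve the disagreement''---after an intermediate cheat it either yields a fresh $\epsilon_i$-level disagreement (a permanent win past $s_1$) or restores the \emph{old} top-level agreement with the same $\sigma$---but since intermediate cheats at a fixed bit level are finite this does not affect your conclusion, and you correctly flag the computability extraction of $n\mapsto\sigma_n$ as the point requiring care.
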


\begin{proof}
We follow a standard finite-injury argument. Induct on~$i$. Assume that for
all $m<i$, $\mathcal{R}_m$ has stopped acting by stage~$s_0$. Thus, the
starting measure~$\epsilon_i$ also settles down. Since the opponent cannot
exceed measure~$1/8$, he will only cheat by exceeding~$\epsilon_i$ finitely
many times. Let $s_1>s_0$ be a stage after which the opponent never uses more
than~$\epsilon_i$ measure that affects the $\mathcal{R}_i$-games.

Starting from stage~$s_1$ and the one-bit game, we can always assume that the
tt-reduction converges to define a truth table, since otherwise we have an
automatic satisfaction and the requirement stops acting when it is waiting
for the tt-reduction to converge.

In addition, we can assume that starting from stage~$s_1$, we never start
playing any~$\epsilon_i$ measure games with the opponent for~$\mathcal{R}_i$,
since otherwise we have a permanent win as the opponent can no longer cheat,
and the requirement~$\mathcal{R}_i$ will eventually stop acting. Furthermore,
any other game started for~$\mathcal{R}_i$ must end in the opponent cheating,
else we would get a permanent win.

Assume for a contradiction that requirement~$\mathcal{R}_i$ is not satisfied.
Then for each $k\in\omega$, we can establish the stacks of games
from~$\epsilon_i$ to the $0$ game for every $k$-bit game.  Note that as
described in \S\ref{bishop}, we may resettle agreement after intermediate
level cheating.  We can see that the set~$X$ is going to be computable since
the~$\sigma$'s as in the construction have to be initial segments of~$X$ in
order for the game to continue forever.  The purpose of the knight and bishop
argument in \S\ref{bishop} was to ensure that intermediate level cheating
could not alter the agreed-upon value of~$\sigma$, so we need only find the
first such~$\sigma$ after stage~$s_1$ to know that~$\sigma$ is an initial
segment of~$X$.  However, we know that~$X$ is in fact not computable, so
there must be some~$k$ such that $X(k)$ is going to be different from
$\sigma(k)$, where~$\sigma$ is agreed upon, hence must be an initial segment
of the tt-reductions from each set~$R_0$,~$R_1$ and~$R_2$.  When this $X(k)$
settles down in its $\Delta^0_2$-approximation and we see that it differs
from $\sigma(k)$, the requirement~$\mathcal{R}_i$ stops acting (possibly
after playing several more games to establish the current stack) and is
permanently satisfied.

In any of the ways in which~$\mathcal{R}_i$ can be satisfied, the requirement
stops acting after some finite stage. Thus, the induction can continue.
\end{proof}

\begin{lem}
In our construction of~$R_j$ for $j \le 2$, we do not exceed the measure we
are allowed to use, namely,~$7/8$.  Thus, the~$U_j$ are each universal
prefix-free machines.
\end{lem}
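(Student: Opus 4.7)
The plan is to verify $\mathrm{wt}(A_j) \le 1$ for each $j \le 2$ by splitting the KC weight into the opponent's coding contribution and our own game moves. The opponent enumerates $\langle d+3,\tau\rangle$ into $A_j$ whenever $\langle d,\tau\rangle \in A_V$, contributing total weight $\sum 2^{-(d+3)} = \tfrac{1}{8}\,\mathrm{wt}(A_V) \le \tfrac{1}{8}$, since $A_V$ is itself a KC set. It therefore suffices to bound our own contribution to each $A_j$ by $\tfrac{7}{8}$; and since in a game $G(\epsilon,\delta)$ on $R_j$ our moves add KC weight at most $\delta$ to $A_j$, this reduces to summing our budgets across all games we ever start.

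Fix $i$ and list the successive values $\epsilon_i^{(0)}, \epsilon_i^{(1)}, \ldots$ assigned to $\epsilon_i$ by Case~1. The redefinition rule forces each $\epsilon_i^{(k+1)}$ to be strictly smaller than half of every previously defined $\epsilon_m$; in particular $\epsilon_i^{(k+1)} < \epsilon_i^{(k)}/2$, so $\sum_k \epsilon_i^{(k)} < 2\epsilon_i^{(0)} \le 2^{-(i+3)}$. Within a single activation with starting measure $\epsilon = \epsilon_i^{(k)}$, the stack of games has budgets $\epsilon, \epsilon/2, \epsilon/4, \ldots$ summing to at most $2\epsilon$. A knight--bishop restart (Case~3b) at level $\delta$ is triggered only when the opponent has first spent at least $\delta$ of his own measure, so within one activation the number of such restarts is bounded by his cumulative cheating, itself less than $\epsilon$ before Case~3c aborts the activation entirely. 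Advancing from the $n$-bit to the $(n+1)$-bit stack happens only when no game was actually played at the $n$-bit level, so extending to more bits adds no further measure. A routine accounting then yields a constant $c$, independent of $i$, with our total spend in any single activation at most $c\epsilon$.

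Summing, the contribution of $\mathcal{R}_i$ to any $A_j$ is at most $c \sum_k \epsilon_i^{(k)} \le c\cdot 2^{-(i+3)}$, and summing over $i$ gives $c/4$, which stays below $\tfrac{7}{8}$ provided $c \le \tfrac{7}{2}$. The main obstacle is verifying this constant: the interplay of geometric stack descent, knight--bishop restarts using the ``bishop'' set, and the progression from $n$-bit to $(n+1)$-bit games has to be bookkept carefully, but the generous slack in choosing $\epsilon_i \le 2^{-(i+4)}$ (rather than, say, $2^{-i}$) was built in precisely to absorb whatever small constant emerges. Once $\mathrm{wt}(A_j) \le 1$ is established for each $j$, the KC Theorem produces the prefix-free machines $U_j$, and the hard-coded simulation $U_j(000\sigma) = V(\sigma)$ ensures each $U_j$ simulates the fixed universal machine $V$ with an additive constant, hence is itself universal.
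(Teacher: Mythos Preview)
Your overall decomposition (opponent's $1/8$ plus our contribution) is fine, and the bound $\sum_k \epsilon_i^{(k)} \le 2^{-(i+3)}$ on the successive starting measures is correct. The gap is in the claim that our spend within a single activation is bounded by $c\epsilon$ for a fixed constant~$c$, which rests on the assertion that Case~3c ``aborts the activation entirely.'' It does not: Case~3c merely stops the current games and returns to Case~2 \emph{with the same}~$\epsilon_i$; only Case~1 (triggered by some $\mathcal R_m$ with $m<i$ acting) redefines~$\epsilon_i$. Hence within one activation the opponent may trigger Case~3c on the order of $1/(8\epsilon_i)$ times, each time after we have spent up to $O(\epsilon_i)$, so the per-activation spend can be $\Theta(1)$ rather than $c\epsilon$. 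Your ``routine accounting'' cannot produce the required constant. Relatedly, the claim that we advance from the $n$-bit to the $(n{+}1)$-bit stack only when no game was played at the $n$-bit level is also false: after a cheat and a return to Case~2 we may find agreement throughout the $n$-bit stack and advance, having already spent measure there.

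The paper handles this by a different split. It separates our measure into (i)~``useful'' measure spent in games we ultimately win, (ii)~``wasted'' measure abandoned when a higher-priority requirement acts, and (iii)~``lost'' measure abandoned because the opponent cheated. Only (i) and (ii) are bounded against the $\epsilon_i$'s, giving at most $\sum_i 2^{-(i+3)} \le 1/4$. For~(iii), the key observation is that in every cheating episode we spend less than twice what the opponent spends: we are playing at levels $\delta$ and $\delta/2$ (or $\epsilon_i$ and $\epsilon_i-\lambda$ in the unbalanced case), and to cheat he must exceed the smaller budget. So~(iii) is charged \emph{globally} against the opponent's total of $1/8$, giving at most $1/4$. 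The sum $1/4+1/4=1/2\le 7/8$ finishes it. You should replace the attempted per-activation constant with this global charging of lost measure against the opponent.
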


\begin{proof}
There are three portions of the measure usage. The first is the measure we
use for diagonalization with which we actually have a permanent win in the
end (the ``useful'' measure). The second is the measure we waste when we
reset games when higher-priority requirements act (the ``wasted'' measure);
the third is the measure we lose when the opponent cheats in the games (the
``lost'' measure). It is easy to see that, since each time we reset the
starting measure for a requirement, we pick a new starting
measure~$\epsilon_i$ which can be arbitrarily small, the total amount of the
first and the second portions is easily bounded (by, for example,~$1/4$). In
the construction, our choice of~$\epsilon_i$ led to each~$i$ contributing no
more than $2^{-(i+3)}$ to the measure, so the total amount contributed by
all~$i$ is at most~$1/4$.

For the third portion, we can compare the amount of lost measure to the
amount of measure the opponent uses.  When the opponent cheats, then we use
less than twice the measure the opponent uses.  To see this, note that there
are only two situations where we can use more measure than the opponent uses
in cheating.  One is when there are two games being played simultaneously,
and one is a $G(\delta,\delta)$ game while the other is a
$G(\delta/2,\delta/2)$ game.  The opponent can cheat by exceeding~$\delta/2$,
while we may enumerate up to~$\delta$ measure for~$R_j$.  Thus, we enumerate
less than twice what the opponent enumerates.  The other situation is when
the opponent cheated previously by enumerating~$\lambda$ measure, which led
to us playing a $G(\epsilon_i,\epsilon_i)$ game on~$R_j$ along with a
$G(\epsilon_i-\lambda, \epsilon_i)$ game on~$R_k$.  In the game on~$R_k$,
since we enumerated nothing in the previous game, if the opponent cheats now,
his total measure in the two games will exceed~$\epsilon_i$, while ours
for~$R_k$ will not.  For~$R_j$, we may have enumerated strings into~$A_j$ in
both this game and the previous game.  However, our total for the two games
will not exceed 2$\epsilon_i$, or twice the opponent's measure.  Thus, the
lost measure is bounded by~$1/4$, which is twice the opponent's measure.

Note that we are not double-counting the opponent's moves when accounting for
lost measure. In particular, if $i<j$, then no move in an
$\mathcal{R}_j$-game can affect an $\mathcal{R}_i$-game (by the choice of
$\epsilon_j$). On the other hand, any move in an $\mathcal{R}_i$-game is
counted as an $\mathcal{R}_i$ action, so it resents any current
$\mathcal{R}_j$ game. This means that an opponent's move is only counted in
one game on~$R_k$, for each $k\leq 2$.

Finally, $1/4+1/4=1/2$ bounds the total amount of measure we use in
the construction, which therefore does not exceed the amount we are
allowed to use, namely,~$7/8$.
\end{proof}

This concludes the proof of Theorem~\ref{knight-bishop}.
\end{proof}


\begin{thebibliography}{99}

\bibitem{allender}
  Eric W. Allender,
  Curiouser and curiouser: the link between incompressibility and complexity.
  {\em How the World Computes: Turing Centenary Conference and
  8th Conference on Computability in Europe, CiE 2012, Cambridge,
  UK, June 18-23, 2012. Proceedings},
  volume 7318 of {\em Lecture Notes in Computer Science}.
  Springer-Verlag, Berlin (2012), pp. 11--16.

\bibitem{ABK:06}
  Eric Allender, Harry Buhrman, and Michal Kouck{\'y}.
  What can be efficiently reduced to the {K}olmogorov-random strings?
  {\em Annals of Pure and Applied Logic}, {\bf 138} (2006) no. 1--3 pp. 2--19.

\bibitem{ABKMR:06}
  Eric Allender, Harry Buhrman, Michal Kouck{\'y}, Dieter van Melkebeek, and
  Detlef Ronneburger.
  Power from random strings.
  {\em SIAM Journal on Computing}, {\bf 35} (2006) no. 6, pp. 1467--1493.

\bibitem{AFG}
  Eric W. Allender, Luke B. Friedman, and William I. Gasarch.
  Limits on the computational power of random strings.
  {\em Information and Computation}, {\bf 222} (2013), pp. 80--92.

\bibitem{Ambos-Spies}
  Klaus Ambos-Spies.
  Cupping and noncapping in the r.e.\ weak truth table and Turing degrees.
  {\em Archiv f\"ur mathematische Logik und Grundlagenforschung},
  {\bf 25} (1985) no. 3--4, pp. 109--126.

\bibitem{BFKL:10}
  Harry Buhrman, Lance Fortnow, Michal Kouck{\'y} and Bruno Loff.
  Derandomizing from random strings.
  In {\em 25th {A}nnual {IEEE} {C}onference on {C}omputational
  {C}omplexity---{CCC} 2010}. IEEE Computer Soc., Los Alamitos,
  CA (2010), pp. 58--63.

\bibitem{Chaitin}
  Gregory J. Chaitin.
  A theory of program size formally identical to information theory.
  {\em Journal of the Association for Computing Machinery},
  {\bf 22} (1975), pp. 329--340.

\bibitem{Degtev}
  Alexander N. Degtev.
  $tt$- and $m$-degrees,
  {\em Algebra i Logika} {\bf 12} (1973), pp. 143--161,
  transl. {\bf 12} (1973) pp. 78--89.

\bibitem{Downey-Hirschfeldt}
  Rodney G. Downey and Denis R. Hirschfeldt.
  {\em Algorithmic Randomness and Complexity.}
  {\em Theory and Applications of Computability.}
  Springer, New York (2010).

\bibitem{DS}
  Rodney G. Downey and Richard A. Shore.
  Degree-theoretic definitions of the low$_2$ recursively enumerable degrees.
  {\em Journal of Symbolic Logic} {\bf 60} (1995) pp. 727--756.

\bibitem{Gacs}
  P\'eter G\'acs.
  Lecture notes on descriptional complexity and randomness.
  Boston University, 1993-2005.
  Available at http://www.cs.bu.edu/faculty/gacs/recent-publ.html.

\bibitem{Jockusch}
  Carl G. Jockusch, Jr.
  Relationships between reducibilities.
  {\em Transactions of the American Mathematical Society},
  {\bf 142} (1969), pp. 229--237.

\bibitem{Kobzev}
  G. N. Kobzev.
  On $tt$-degrees of recursively enumerable Turing degrees.
  {\em Mat. Sborn.} {\bf 106} (1978), pp. 507--514,
  transl. {\bf 35} (1979) pp. 173--180.

\bibitem{Kummer}
  Martin Kummer.
  On the complexity of random strings.
  In C. Puech and R. Reischuk, editors, {\em STACS '96.
  Proceedings of the 13th Annual Symposium on Theoretical Aspects of
  Computer Science held in Grenoble, Feb 22-24, 1996},
  volume 1046 of {\em Lecture Notes in Computer Science}.
  Springer-Verlag, Berlin (1996), pp. 25--36.

\bibitem{Levin}
  Leonid A. Levin.
  {\em Some Theorems on the Algorithmic Approach to Probability Theory and
  Information Theory.}  Dissertation in Mathematics, Moscow University (1971).
  In Russian.

\bibitem{LiVitanyi}
  Ming Li and Paul M. B. Vit\'anyi.
  {\em An Introduction to {K}olmogorov Complexity and Its Applications.}
  Third Edition.
  {\em Texts in Computer Science.}
  Springer, New York (2008).

\bibitem{Marchenkov}
  Sergey S. Marchenkov.
  The existence of recursively enumerable minimal $tt$-degrees,
  {\em Algebra i Logika} {\bf 14} (1975) pp. 422--429,
  transl. {\bf 14} (1975) pp. 257--261.

\bibitem{Muchnik}
  Andrey A. Muchnik and Semen E. Positselky,
  Kolmogorov entropy in the context of computability theory.
  {\em Theoretical Computer Science}, {\bf 271} (2002), pp. 15--35.

\bibitem{Nies}
  Andr\'e Nies.
  {\em Computability and Randomness}.
  Volume 51 of {\em Oxford Logic Guides}.
  Oxford University Press, Oxford (2009).

\bibitem{Schnorr}
  Claus-Peter Schnorr.
  Process complexity and effective random tests.
  {\em Journal of Computer and System Sciences}, {\bf 7} (1973), pp. 376--388.

\bibitem{Solovay}
  Robert M. Solovay.
  {\em Draft of paper (or series of papers) on Chatin's work.}
  Unpublished notes, (1975).

\end{thebibliography}
\end{document}